\newtheorem{prp}{Proposition}
\newtheorem{asm}{Assumption}
\newtheorem{lem}{Lemma}
\begin{document}

\title{Structured, Sparse Aggregation}
\author{Daniel Percival  \\
Department of Statistics \\ Carnegie Mellon University \\ Pittsburgh, PA 15213 \\ 
email: \texttt{dperciva@andrew.cmu.edu} }

\maketitle

\newpage

\mbox{}
\vspace*{2in}
\begin{center}
\textbf{Author's Footnote:}
\end{center}
Daniel Percival is Doctoral Candidate in Statistics at Carnegie Mellon University. Mailing address: Department of Statistics, Carnegie Mellon University, Pittsburgh, PA 15213
(email: dperciva@andrew.cmu.edu).  This work was funded by the National Institutes of Health grant
MH057881 and National Science Foundation grant DMS-0943577.  The author would like the thank Larry Wasserman and Kathryn Roeder for helpful comments and discussions.

\newpage
\begin{center}
\textbf{Abstract}
\end{center}
We introduce a method for aggregating many least
squares estimator so that the resulting estimate has two properties:
sparsity and structure. That is, only a few candidate covariates are used
in the resulting model, and the selected covariates follow some
structure over the candidate covariates that is assumed to be known a priori. While sparsity is well studied in many
settings, including aggregation, structured sparse methods are still
emerging. We demonstrate a general framework for structured sparse
aggregation that allows for a wide variety of structures, including
overlapping grouped structures and general structural penalties defined as set
functions on the set of covariates. We show that such estimators
satisfy structured sparse oracle inequalities --- their finite sample risk adapts to the structured sparsity of the target. These inequalities reveal that under suitable settings, the structured sparse estimator performs at least as well as, and potentially much better than, a sparse aggregation estimator. We empirically establish the
effectiveness of the method using simulation and an application to
HIV drug resistance.  

\vspace*{.3in}

\noindent\textsc{Keywords}: {Sparsity, Variable Selection, Aggregation, Sparsity Oracle Inequalities, HIV Drug Resistance}

\newpage

\section{Introduction}

In statistical learning, sparsity and variable selection are well
studied and fundamental topics. Given a large set of candidate
covariates, sparse models use only a few in the
model. Sparse techniques often improve out of sample performance and
aid in model interpretation. We focus on the linear regression setting. Here, we
model a vector of responses $\textbf{y}$ as a linear combination of
$M$ predictors, represented as an $n \times M$ data matrix $\textbf{X}$, via
the equation $\textbf{y} = \textbf{X}\boldsymbol{\beta} + \boldsymbol{\epsilon}$, where $\boldsymbol{\beta}$ is a vector
of linear coefficients and $\boldsymbol{\epsilon}$ is a vector of stochastic noise.  The task is then to produce an 
estimate of $\boldsymbol{\beta}$, denoted $\widehat{\boldsymbol{\beta}}$, using $\textbf{X}$ and $\textbf{y}$.
Sparse modeling techniques produce a $\widehat{\boldsymbol{\beta}}$ with only a few
nonzero entries, with the remaining set equal to zero, effectively
excludes many covariates from the model.  One example of a sparse regression method is the lasso estimator~\citep{tibshiraniLasso}:
\begin{align}
\widehat{\boldsymbol{\beta}}_{\mbox{lasso}} = \underset{\boldsymbol{\beta} \in
  \mathbb{R}^M}{\operatorname{argmin}} \|\textbf{y} -
\textbf{X}\boldsymbol{\beta}\|_2^2 + \lambda \sum_{j=1}^M |{\beta}_j|.
\end{align}
In the above, $\lambda>0$ is a tuning parameter.  Here, the $\ell_1$
penalty encourages many entries of
$\widehat{\boldsymbol{\beta}}_{\mbox{lasso}}$ to be identically zero, giving a
sparse estimator.

Suppose now that additional structural information is available
about the covariates. We then seek to incorporate this information in
our sparse modeling strategy, giving a {\em structured, sparse}
model. For example, consider a factor covariate with
$u$ levels, such as in an ANOVA model, encoded
as a set of $u-1$ indicator variables in $\textbf{X}$.  Taking this
 structure into account, we then would jointly
select or exclude this set of covariates from our sparse model. More
generally, suppose that we have a graph with $M$ nodes, each node
corresponding to a covariate. This graph might represent a spatial relationship between the covariates. A sparse model incorporating this information might jointly include or exclude sets of predictors corresponding to neighborhoods or
cliques of the graph. In summary, sparsity seeks a $\widehat{\boldsymbol{\beta}}$ with few nonzero entries, whereas structured sparsity seeks a sparse $\widehat{\boldsymbol{\beta}}$ where the nonzero entries following some a priori defined pattern.

As an example, consider the results displayed in Figure~\ref{twoDimOneRunFigure}. In
the top left, we see a coefficient vector, rearranged as a square matrix. The nonzero entries, represented as white squares, have a clear structure with respect to the familiar two dimensional lattice. On the bottom row, we display
the results of two sparse methods, including the lasso. The top right
panel displays the results of one of the methods of this paper. Since our
method also takes the structural
information into account, it is able to more
accurately re-create the sparsity pattern pictured in the top left.

Though methods for structured sparsity are still emerging, there are now many
examples in the literature. The grouped
lasso~\citep{Yuan06modelselection} allows for joint selection of
covariates, where the groups of covariates partition the set of
covariates. Subsequent work
\citep*{Huang:2009:LSS:1553374.1553429,Jacob:2009:GLO:1553374.1553431,Jenatton09structuredsparse}
extended this idea to allow for more flexible structures based on overlapping groups of covariates. 
 Further, \cite{Bach10hierarchical} and~\cite*{Zhao-cap-penalty} proposed
 methods for Hierarchical structures
 and~\cite{xingTreeGuided} as well as \cite{pengMasterPredictors} gave methods
 in the multi-task setting for coherent variable selection across tasks.


In this paper, we present an aggregation estimator that produces
structured sparse models.  In the linear regression setting,
aggregation estimators combine many estimates of $\boldsymbol{\beta}$:
$\{\widehat{\boldsymbol{\beta}}_1,\ldots,\widehat{\boldsymbol{\beta}}_B\}$ in some way to give
an improved estimate $\widehat{\boldsymbol{\beta}}_{\mbox{Aggregate}}$.  See
\cite*{buneaAggregation} and the references therein
 for discussions of aggregation in general
settings, and~\cite{Yang_adaptiveregression,Yang_adaptiveregressionBern} for methods in the linear regression setting. In particular, we extend the methods and results given
by~\cite{RigTsy10}, who focused on sparse aggregation, where the
estimated $\widehat{\boldsymbol{\beta}}_{\mbox{Aggregate}}$ has many entries equal
to zero. Their sparse aggregation method combines in a weighted
average the least squares
estimates for each subset of the
set of candidate covariates.
For a particular model in the average, its weight
is, in part, inversely exponentially proportional to the number of covariates in the
model, i.e. the sparsity of the model. This strategy
encourages a sparse $\widehat{\boldsymbol{\beta}}_{\mbox{Aggregate}}$. We extend this idea by
proposing an alternate set of weights that are instead depend on the structured sparsity of the sparsity patterns,
accordingly encouraging a structured sparse $\widehat{\boldsymbol{\beta}}_{\mbox{Aggregate}}$.

We give extensions that cover a wide range possible structure inducing strategies. These include overlapping grouped structures and structural penalties based on hierarchical structures or arbitrary set functions. These parallel many convex methods for structured sparsity from the literature, see Section~\ref{structuedAgg}. Though structure can be useful for interpretability, we must consider whether injecting structure into a sparse method has a beneficial impact under reasonable conditions. In this paper we demonstrate that our estimators perform no worse than sparse estimators when the true model is structured sparse. In the group sparsity case, they can give dramatic improvements. These results hold for a very general class of structural modifications, including overlapping grouped structures.

We first give a review the sparse
aggregation method of~\cite{RigTsy10} in Section~\ref{ES}. In
Section~\ref{structuedAgg} we discuss our methods for structured
sparse aggregation.  We introduce two settings: structurally penalized
sparse aggregation (Section~\ref{structPenalizedNorm}), and group structured sparse
aggregation (Section~\ref{groupedEllZero}). We present the theoretical
properties of these estimators in Section~\ref{Theory}. We then
present a simulation study and an application to HIV drug resistance
in Section~\ref{applicationSimHIV}. We finally give some concluding remarks
and suggestions for future directions in Section~\ref{conclude}. Proofs of general versions of the main theoretical results are given in the supplementary material.

\section{Sparsity Pattern Aggregation}
\label{ES}

The sparse aggregation method of~\cite{RigTsy10} builds on the results
of~\cite{leungInfoMixing}. The
method creates an aggregate estimator from a weighted average the
$2^M$ ordinary least squares regressions on all subsets of the $M$
candidate covariates. The method encourages sparsity by letting the
weight in the average for a particular model increase as the sparsity
of the model increases.  We first establish our notation and
setting, and then present the basic formulas behind the method.  We
finally discuss its implementation  via a stochastic greedy algorithm. 

\subsection{Settings and the Sparsity Pattern Aggregation Estimator}

We consider the linear regression model:
\begin{align}
\textbf{y} = \textbf{X}^T\boldsymbol{\beta} + \boldsymbol{\epsilon}.
\end{align}
Here, we have a response $\textbf{y} \in \mathbb{R}^n$,
$n \times M$ data matrix $\textbf{X} =
[\textbf{x}_1,\ldots,\textbf{x}_M]$ --- where $\textbf{x}_i \in \mathbb{R}^n$, and vector of coefficients $\boldsymbol{\beta} \in
\textbf{R}^M$. From here on, we assume that $\textbf{X}$ is normalized so $\|\textbf{x}_i\|_2^2 \leq 1 \ \forall i$. The entries of the $n-$vector of errors
$\boldsymbol{\epsilon}$ are $\mbox{i.i.d.}$ $N(0,\sigma^2)$.  Assume that $\sigma^2$ is known.  Let
$\|\cdot\|_p$ dnote the $\ell_p$ norm for $p \geq 1$.  Let
$\mbox{supp}(\cdot)$ denote the support of a vector, the set of
indices for which the entries are nonzero.  Denote $\|\cdot\|_0 =
|\mbox{supp}(\cdot)|$ as the $\ell_0$ norm.  Let the set $\mathcal{I} = \{1,\ldots,M\}$ index the set of candidate covariates.

Define the set $\mathcal{P} = \{0,1\}^M$; $|\mathcal{P}| = |2^\mathcal{I}| = 2^M$. $\mathcal{P}$ encodes all sparsity
patterns over our set of candidate covariates
 ---
the $i$th element of $\textbf{p} \in \mathcal{P}$ is 1 if covariate
$i$ is included in the model, and 0 if it is excluded.  Let
$\widehat{\boldsymbol{\beta}}_\textbf{p}$ be the ordinary least squares solution
restricted to the sparsity pattern $\textbf{p}$:
\begin{align}
\label{OLS}
\widehat{\boldsymbol{\beta}}_\textbf{p} &= \underset{\boldsymbol{\beta} \in \mathbb{R}^M: \ 
  \mbox{supp}(\boldsymbol{\beta}) \subseteq \mbox{supp}(\textbf{p})}{\operatorname{argmin}}\| \textbf{y} - \textbf{X}\boldsymbol{\beta}\|^2_2.
\end{align}
Define the training error of an estimate $\widehat{\boldsymbol{\beta}}$ to be:
\begin{align}
\mbox{Error}(\widehat{\boldsymbol{\beta}}) = \|\textbf{y} - \textbf{X}\widehat{\boldsymbol{\beta}}\|^2_2.
\end{align}
Then, the sparsity pattern aggregate estimator coefficients are defined as: 
\begin{align}
\label{SPA-estimator}
\widehat{\boldsymbol{\beta}}^{SPA}:= \frac{\sum_{\textbf{p} \in \mathcal{P}}
  \widehat{\boldsymbol{\beta}}_\textbf{p} \exp\left( -\frac{1}{4\sigma^2} \mbox{Error}(\widehat{\boldsymbol{\beta}}_\textbf{p}) -
    \frac{\|\textbf{p}\|_0}{2} \right) \pi_\textbf{p}}{\sum_{\textbf{p}
    \in \mathcal{P}} \exp\left( -\frac{1}{4\sigma^2} \mbox{Error}(\widehat{\boldsymbol{\beta}}_\textbf{p}) - \frac{\|\textbf{p}\|_0}{2}
  \right) \pi_\textbf{p}}. 
\end{align}
Here, we obtain $\widehat{\boldsymbol{\beta}}^{SPA}$ by taking a weighted
average over all sparsity patterns $\textbf{p}$. The weights in this average
are a product of an exponentiated unbiased estimate of the risk and a
prior, $\pi_\textbf{p}$, over the sparsity patterns. This strategy is based on the work of~\cite{leungInfoMixing}, who demonstrated this form results in several appealing theoretical properties which form the basis of the theory of~\cite{RigTsy10} and our own methods. \cite{RigTsy10} consider the following prior:
\begin{align}
\label{SPA_PRIOR}
\pi_{\textbf{p}} := \left\{
     \begin{array}{lr}
       \frac{1}{H} \left( \frac{\|\textbf{p}\|_0}{2eM}
       \right)^{\|\textbf{p}\|_0} & \|\textbf{p}\|_0 \leq R\\
       \frac{1}{2} & \|\textbf{p}\|_0 = M\\
       0 & \mbox{else}
     \end{array}
   \right. .
 \end{align}
Here, $H$ is a normalizing constant and $R = \mbox{rank}(\textbf{X})$. The above prior places exponentially less weight on sparsity patterns
as their $\ell_0$ norm increases, up-weighting sparse models. The
weight of $1/2$ on the OLS solution is included for theoretical
calculations; in practice this case is treated as other cases --- see
the supplementary material. This
specific choice of prior had many theoretical and computational
advantages. In section~\ref{structuedAgg}, we consider modifications
to the prior weight to encourage both structure and sparsity. 

\subsection{Computation}
\label{stochasticAlgorithm}

Exact computation of the sparsity pattern aggregate estimator is
clearly impractical, since it would require fitting $2^M$ models. \cite{RigTsy10} give a
 Metropolis-Hastings stochastic greedy algorithm based on work by~\cite{PAC_bayes} for approximating the sparsity
pattern aggregate --- the procedure is reviewed in the supplement. The procedure performs a random walk over the hypercube of all
sparsity patterns.  Beginning with an empty model, in each step, one covariate is
randomly selected from the candidate set, and
proposed to be added to the model if it is already in the current model or to be removed from the current model
otherwise. These proposals are accepted or rejected using a Metropolis
step, with probability related to the product of the difference in risk and the ratio of prior weights. 

Two practical concerns arise from this approach. First, the algorithm assumes that $\sigma^2$ is known. Second, the metropolis algorithm requires significant additional computation than competing sparse methods. Regarding the variance,~\cite{RigTsy10} proposed a two stage scheme: the algorithm is run twice, and the residuals from the first run provide an estimate for the variance for the second run. To the second point, a simple analysis of the algorithm reveals that at each iteration of the MCMC method, we must fit a linear regression model. In order to effectively explore the sparsity pattern hypercube, we must run the Markov chain on the order of $M$, the number of candidate predictors, iterations. We can therefore expect computation times on the order of a linear regression fit times $M$. When $M$ is a much higher order than the number of observations, this is a concern. This makes the sparse estimator difficult to compute in very high dimensional settings. However, in a structured sparse problem, we may have structural information that effectively reduces the order of $M$, such as in group sparsity.

\section{Structured, Sparse Aggregation}
\label{structuedAgg}

The sparsity pattern aggregate estimator derives its sparsity property from
placing a prior on sparsity patterns that is
inversely proportional to the $\ell_0$ norm of the pattern.  This
up-weights models with sparsity patterns with low $\ell_0$ norm,
encouraging sparsity. We 
propose basing similar priors on different set functions than the
$\ell_0$ norm. These set functions are chosen so that the resulting estimator
simultaneously encourages sparsity and structure. Thus, the resulting
estimators upweight structured, sparse models.
We consider two class of
functions: structurally penalized $\ell_0$ norms and grouped $\ell_0$ norms.

\subsection{Penalized Structured Sparsity Aggregate Estimator}
\label{structPenalizedNorm}

Consider penalizing the $\ell_0$ norm with some non-negative set function that measures the structure of the sparsity pattern.  We will show later (see
Assumption~\ref{assumptionLOS} in Section~\ref{TheoryS1}) that if this set function if non-negative and does not exceed $M$, we can guarantee similar theoretical properties as
the sparsity pattern aggregate estimator.  More formally, consider the
following extension:  
\begin{align}
\textbf{p} \in \mathcal{P}: \|\textbf{p}\|_{0,c} &:= \|\textbf{p}\|_0 + \|\textbf{p}\|_c,\\
\mbox{where: }\|\textbf{p}\|_c := \|\mbox{supp}(\textbf{p})\|_c: 2^\mathcal{I} &\to [0,M] \subset \mathbb{R},\\
\|\textbf{0}\|_c & := 0.
\end{align}
We then define the following prior on $\mathcal{P}$:
\begin{align}
\label{SSPA_PRIOR}
\pi_{\textbf{p},c} := \left\{
     \begin{array}{lr}
       \frac{1}{H_c} \left( \frac{\|\textbf{p}\|_{0,c}}{2eM}
       \right)^{\|\textbf{p}\|_{0,c}} & \|\textbf{p}\|_{0} \leq R\\
       \frac{1}{2} & \|\textbf{p}\|_{0} = M\\
       0 & \mbox{else}
     \end{array}
   \right. .
 \end{align}
Where $H_c$ is a normalizing constant. For our subsequent theoretical analysis, we note that since
$\|\textbf{p}\|_{0,c} \leq 2M$ then we know that $H_c
\leq 4$. We then define the structured sparsity aggregate (SSA) estimator as: 
\begin{align}
\widehat{\boldsymbol{\beta}}^{SSA}:= \frac{\sum_{\textbf{p} \in \mathcal{P}}
  \widehat{\boldsymbol{\beta}}_\textbf{p} \exp\left( -\frac{1}{4\sigma^2} \mbox{Error}(\widehat{\boldsymbol{\beta}}_\textbf{p}) -
    \frac{\|\textbf{p}\|_0}{2} \right) \pi_{\textbf{p},c}}{\sum_{\textbf{p}
    \in \mathcal{P}} \exp\left( -\frac{1}{4\sigma^2} \mbox{Error}(\widehat{\boldsymbol{\beta}}_\textbf{p}) - \frac{\|\textbf{p}\|_0}{2}
  \right) \pi_{\textbf{p},c}}. 
\end{align}
We now discuss some possible choices for the structural penalty
$\|\cdot\|_c$.  Note that the general consequence of the prior is that
sparsity patterns with higher values of $\|\cdot\|_c$ will be
down-weighted.  At the same time, the prior still contains the
$\ell_0$ norm as an essential element, and so it enforces a trade-off between
sparsity and the structure captured by the additional term.   

\begin{itemize}
\item \textbf{Covariate Weighting}.
Consider the function $\|\textbf{p}\|_c = \sum_{i=1}^M
c_i\textbf{p}_i$ such that $\sum_{i=1}^M c_i < M$, $c_i>0 \ \forall \
i$.  This has the effect of weighting the covariates, discouraging those with
high weight to enter the model. These weights can be determined in a
wide variety of ways, including simple prior belief elicitation.  This weighting scheme is related to the
 prior suggested in~\cite{Hoeting&1999} in the bayesian model averaging
setting. This strategy also has the flavor of the individual weighting in the adaptive lasso, where \cite{Zou_2006} considered weighting each coordinate in the lasso using coefficient estimates from OLS or marginal regression.

\item \textbf{Graph Structures}.
Generalizing previous work, \cite{Bach10structuredsparse} suggested many
structure inducing set functions in the regularization setting.  Many of
these functions can be easily adapted to this framework. For example, given a directed acyclic graph (DAG)
structure over $\mathcal{I}$, the following penalty encourages a
hierarchical structure:  
\begin{align}
\|\textbf{p}\|_c = |\{\mbox{Ancestors of supp}(\textbf{p})\}|.
\end{align}
If we desire strong hierarchy, we can additionally define $\pi_{\textbf{p},c} := 0$ if the sparsity pattern of $\textbf{p}$ does not obey the hierarchical structure implied by the DAG. Strong hierarchy may also greatly increase the speed of the MCMC algorithm by restricting the number of predictors potentially sampled at each step.

Alternately, suppose we have a set of weights over pairs of predictors
represented by the function $d: \mathcal{I} \times \mathcal{I} \to
\mathbb{R}^+$. Given a graph over the candidate covariates, this could correspond to edge weights, or
the shortest path between two nodes (covariates). More generally, it could
correspond to a natural geometric structure such as a line or a
lattice, see~\cite*{percivalStructureSparse} for such
an example. We can use these weights to define the cut function: 
\begin{align}
\|\textbf{p}\|_c = \sum_{i \in \mbox{supp}(\textbf{p}); \ j \notin \mbox{supp}(\textbf{p})} d(i,j).
\end{align}
This encourages sparsity patterns to partition the set $\mathcal{I}$
into two maximally disconnected sets, as defined by low values of
$d(\cdot,\cdot)$. This would give sparsity patterns corresponding
to isolated neighborhoods in the graph.

\item \textbf{Cluster Counting}.
We finally propose a new $\|\cdot\|_c$ that measures the
structure of the sparsity pattern by counting the number of clusters
in $\textbf{p}$.  Suppose, we now have a symmetric weight function $d:
\mathcal{I} \times \mathcal{I} \to \mathbb{R}^+$. Suppose we also set
a constant $h>0$. Then, we count the clusters using the following
procedure: 
\begin{enumerate}
\item Define the fully connected weighted graph over the set supp$(\textbf{p})$ with weights given by $d(\cdot,\cdot)$.
\item Break all edges with weight great than $h$.
\item Return the remaining number of connected components as $\|\textbf{p}\|_c$.
\end{enumerate}
This definition encourages sparsity patterns that are clustered with
respect to $d(\cdot,\cdot)$. For computational considerations, note
that this strategy is the same as single linkage clustering with
parameter $h$, or building a minimal spanning tree and breaking all
edges with weight greater than $h$. For many geometries, this definition of
$\|\cdot\|_c$ is easy to compute and update for the MCMC algorithm.  
\end{itemize}

\subsection{Group Structured Sparsity Aggregate Estimator}
\label{groupedEllZero}

In the framework of structured sparsity, one popular representation
of structure is via groups of variables, cf.~\cite{Yuan06modelselection}.  For example, a factor 
covariate with $u$ levels, as in an ANOVA model, can be represented as a collection of
$u-1$ indicator variables.  We would not
select these variables individually, instead preferring to include or
exclude them as a group.  In the case where these groups partition
$\mathcal{I}$, this structure can be easily incorporated into the prior,
theory, and implementation of the sparsity pattern aggregate
estimator.   Suppose we a priori define: 
\begin{align}
\mathcal{G} &:= \{g\}\ \mbox{such that } g\subset \mathcal{I} \ \forall g; \mbox{and }  \cup_{g \in \mathcal{G}} g = \mathcal{I}, \forall g, g' \in \mathcal{G}, g \cap g' = \emptyset, \\
\|\boldsymbol{\beta}\|_{0,\mathcal{G}} &:= |\{g: g \cap \mbox{supp}(\boldsymbol{\beta}) \neq \emptyset \}|,\\
\|\boldsymbol{\beta}\|_{1,\mathcal{G}} &:= \sum_{g \in \mathcal{G}} \|\boldsymbol{\beta}_g\|_2.
\end{align}
$\|\boldsymbol{\beta}\|_{1,\mathcal{G}}$ is the same as the grouped lasso
penalty~\citep{Yuan06modelselection}, which is used to induce sparsity
at the group level in the 
regularization setting. $\|\boldsymbol{\beta}\|_{0,\mathcal{G}}$ is simply the number of groups needed to cover the sparsity pattern of $\boldsymbol{\beta}$.
Thus, we have simply replaced sparsity patterns over all subsets of
predictors with sparsity patterns over all subsets of groups of
predictors.  We can show that the theoretical framework of~\cite{RigTsy10}  
holds with $\mathcal{P} =
\{0,1\}^{|\mathcal{G}|}$, $\|\boldsymbol{\beta}\|_0$ replaced with $\|\boldsymbol{\beta}\|_{0,\mathcal{G}}$,
and $\|\boldsymbol{\beta}\|_1$ replaced with $\|\boldsymbol{\beta}\|_{1,\mathcal{G}}$.
 
A more interesting and flexible case arises when we allow the elements
of $\mathcal{G}$ to overlap.  Here, we adopt the framework
of~\cite{Jacob:2009:GLO:1553374.1553431}, who gave a norm and penalty
for inducing sparsity patterns using overlapping groups in the regularization
setting. In this case, we define the groups as any collection of sets of covariates:
\begin{align}
\mathcal{G} &:= \{g\} \ \mbox{such that } g\subset \mathcal{I} \ \forall g; \mbox{and } \cup_{g \in \mathcal{G}} g = \mathcal{I}.
\end{align}
We now define the $\mathcal{G}$-decomposition as the following set
of size $|\mathcal{G}|$: 
\begin{align}
\mathcal{V}_\mathcal{G}(\boldsymbol{\beta}) &= \{\textbf{v}_g: g
\in \mathcal{G}, \textbf{v}_g \in \mathbb{R}^M \mbox{ s.t. }  \mbox{supp}(\textbf{v}_g) \subseteq g
\},\\
 \mbox{such that} & \sum_{\textbf{v}_g \in \mathcal{V}_\mathcal{G}(\boldsymbol{\beta})} \textbf{v}_g = \boldsymbol{\beta}.
\end{align}
That is, $\mathcal{V}_\mathcal{G}(\boldsymbol{\beta})$ contains single $\textbf{v}_g$ for each $g \in \mathcal{G}$. For arbitrary $\mathcal{G}$ and
$\boldsymbol{\beta}$, $\mathcal{V}_\mathcal{G}(\boldsymbol{\beta})$ is not
unique. We then define the following functions, analogous to the $\ell_0$
and $\ell_1$ norms of the usual sparsity framework: 
\begin{align}
\|\boldsymbol{\beta}\|_{0,\mathcal{G}} &= \min_{G \subset \mathcal{G}; \cup_{g \in
    G} g = \mbox{supp}(\boldsymbol{\beta})} |G|, \label{groupedEllZeroNorm}\\
\|\boldsymbol{\beta}\|_{1,\mathcal{G}} &= \min_{\mathcal{V}_\mathcal{G}(\boldsymbol{\beta})}\left(
  \sum_{g \in \mathcal{G}} \|\textbf{v}_g\|_2 \right) \label{groupedEllOneNorm}
\end{align}
In $\|\cdot\|_{1,\mathcal{G}}$, the minimum is over all possible decomposition $\mathcal{V}_\mathcal{G}(\cdot)$.  Computing $\|\cdot\|_{0,\mathcal{G}}$ is difficult for
arbitrary $\mathcal{G}$.  However, in most applications $\mathcal{G}$
has some regular structure which allows for efficient computation.   The norm
in Equation~\ref{groupedEllZeroNorm} leads to the following choice of prior 
on $\mathcal{P}$: 
\begin{align}
\label{SSPA_GROUPED_PRIOR}
\pi_{\textbf{p}, \mathcal{G}} := \left\{
     \begin{array}{lr}
       \frac{1}{H_ \mathcal{G}} \left( \frac{\|\textbf{p}\|_{0,\mathcal{G}}}{2e|\mathcal{G}|}
       \right)^{\|\textbf{p}\|_{0, \mathcal{G}}} & \|\textbf{p}\|_{0} \leq R\\
       \frac{1}{2} & \|\textbf{p}\|_{0} = M\\
       0 & \mbox{else}
     \end{array}
   \right. .
 \end{align}
By considering
all unions of groups, we obtain an upper bound for the normalizing constant $H_\mathcal{G}
\leq 4$.  We then define the grouped sparsity aggregate (GSA) estimator as:
\begin{align}
\widehat{\boldsymbol{\beta}}^{GSA}:= \frac{\sum_{\textbf{p} \in \mathcal{P}}
  \widehat{\boldsymbol{\beta}}_\textbf{p} \exp\left( -\frac{1}{4\sigma^2} \mbox{Error}(\widehat{\boldsymbol{\beta}}_\textbf{p}) -
    \frac{\|\textbf{p}\|_0}{2} \right) \pi_{\textbf{p}, \mathcal{G}}}{\sum_{\textbf{p}
    \in \mathcal{P}} \exp\left( -\frac{1}{4\sigma^2} \sum_{i=1}^n \mbox{Error}(\widehat{\boldsymbol{\beta}}_\textbf{p}) - \frac{\|\textbf{p}\|_0}{2}
  \right) \pi_{\textbf{p}, \mathcal{G}}}. 
\end{align}

We leave $\mathcal{G}$ general throughout this section and the
subsequent theoretical analysis. There are many possible
definitions of $\mathcal{G}$, such as connected components
or neighborhoods in a graph, groups of factor predictors, or application
driven groups --- see~\cite{Jacob:2009:GLO:1553374.1553431} for some
examples.  In particular, many of the structures mentioned in
Section~\ref{structPenalizedNorm} can be encoded as a series of
groups. 


\section{Theoretical Properties}
\label{Theory}

\cite{RigTsy10} showed that the sparsity pattern aggregate estimator enjoyed great theoretical properties. In summary, they showed that the estimator adapted to the sparsity of the target, measured in both the $\ell_0$ and $\ell_1$ norm. Further, they showed that their sparsity oracle inequalities were optimal in a minimax sense, in particular superior to rates obtained for popular estimators such as the lasso. Moreover, their results required fewer assumptions than those of the lasso, cf~\cite{Bickel_simultaneousanalysis}. In the supplementary material, we give a theoretical
framework for aggregation using priors of our form ---
Equation~\ref{SSPA_PRIOR} and~\ref{SSPA_GROUPED_PRIOR}. The following shows specific applications of this theory, yielding a set of structured sparse oracle inequalities, the first of their kind. 


\subsection{Structurally Penalized $\ell_0$ Norm}
\label{TheoryS1}

We first state an assumption:
\begin{asm}
\label{assumptionLOS}
For all $\textbf{p} \in \mathcal{P}$ where $R>\|\textbf{p}\|_0>0$:
\begin{align}
\frac{\|\textbf{p}\|_0}{\|\textbf{p}\|_{0,c}} \leq
\log\left( 1 + \frac{eM}{\max(\|\textbf{p}\|_{0,c},1)}\right).
\end{align}
\end{asm}
Numerical analysis reveals that a sufficient condition for this assumption is $0 \leq \|\textbf{p}\|_c \leq M$.
%
\begin{prp} 
\label{prop1pen}
Suppose Assumption~\ref{assumptionLOS} holds. For any $M\geq1, n \geq1$, the structured sparsity aggregate
estimator satisfies:
\begin{align}
\mathbb{E} \|\textbf{X}{\widehat{\boldsymbol{\beta}}^{SSA}} - \textbf{y} \|^2_2 \leq 
\min_{\boldsymbol{\beta} \in \mathbb{R}^M} \left\{ \|\textbf{X}{\boldsymbol{\beta}} - \textbf{y}\|^2_2 + \min
  \left\{\frac{\sigma^2 R}{n}, \mbox{ } 9 \sigma^2
  \frac{M_c(\boldsymbol{\beta})}{n} \log \left( 1 + 
    \frac{eM}{\max(M_c(\boldsymbol{\beta}), 1)}\right) \right\} \right\} +
\frac{8\sigma^2}{n}\log 2
\end{align}
Here, $R = \mbox{rank}(\textbf{X})$, and $M_c(\boldsymbol{\beta}) = \|\mbox{sparsity}(\boldsymbol{\beta})\|_{0,c}$, where $\mbox{sparsity}(\boldsymbol{\beta})$ is the sparsity pattern of $\boldsymbol{\beta}$.
\end{prp}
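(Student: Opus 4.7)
The plan is to specialize the general aggregation oracle inequality from the supplementary material (of the Leung--Barron--Rigollet--Tsybakov type) to the prior $\pi_{\textbf{p},c}$ of Equation~\ref{SSPA_PRIOR}, and then to use Assumption~\ref{assumptionLOS} to convert the entropy term arising from the prior into the structural form appearing in the statement. The general inequality will produce a bound roughly of the shape
\begin{align*}
\mathbb{E}\|\textbf{X}\widehat{\boldsymbol{\beta}}^{SSA} - \textbf{y}\|_2^2 \leq \min_{\textbf{p} \in \mathcal{P}}\left\{\|\textbf{X}\widehat{\boldsymbol{\beta}}_\textbf{p} - \textbf{y}\|_2^2 - 4\sigma^2 \log \pi_{\textbf{p},c} + 2\sigma^2\|\textbf{p}\|_0\right\} + C\sigma^2,
\end{align*}
where the $+2\sigma^2\|\textbf{p}\|_0$ comes from the Stein-type correction sitting in the exponential weight and the residual additive constant $C\sigma^2$ will produce the trailing $\frac{8\sigma^2}{n}\log 2$ in the final inequality once one uses $H_c \leq 4$.

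The argument then proceeds in four moves. First, for any fixed $\boldsymbol{\beta} \in \mathbb{R}^M$, I would pick $\textbf{p} = \mbox{sparsity}(\boldsymbol{\beta})$ as the minimizing pattern; since $\widehat{\boldsymbol{\beta}}_\textbf{p}$ is the OLS projection onto the subspace indexed by $\textbf{p}$, it minimizes $\|\textbf{y} - \textbf{X}\boldsymbol{\beta}'\|_2^2$ over all $\boldsymbol{\beta}'$ supported on $\textbf{p}$, and in particular $\|\textbf{X}\widehat{\boldsymbol{\beta}}_\textbf{p} - \textbf{y}\|_2^2 \leq \|\textbf{X}\boldsymbol{\beta} - \textbf{y}\|_2^2$. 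Second, for this $\textbf{p}$ under the first branch of $\pi_{\textbf{p},c}$,
\begin{align*}
-\log \pi_{\textbf{p},c} = \log H_c + M_c(\boldsymbol{\beta})\log\frac{2eM}{M_c(\boldsymbol{\beta})},
\end{align*}
and $\log H_c \leq 2\log 2$, giving a penalty contribution proportional to $\sigma^2 M_c(\boldsymbol{\beta})\log(2eM/M_c(\boldsymbol{\beta}))$.

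Third --- and this is where Assumption~\ref{assumptionLOS} does its essential work --- I would use the assumption to bound the SURE correction $\|\textbf{p}\|_0$ by $\|\textbf{p}\|_{0,c}\log(1 + eM/\max(\|\textbf{p}\|_{0,c},1))$, so that the residual $+2\sigma^2\|\textbf{p}\|_0$ term is absorbed on the same scale as the entropy term $-4\sigma^2\log\pi_{\textbf{p},c}$. A routine calculus inequality of the form $\log(2eM/x) \leq C\log(1 + eM/\max(x,1))$ for $x \in (0,2M]$ lets one consolidate both contributions into a single expression proportional to $M_c(\boldsymbol{\beta})\log(1 + eM/\max(M_c(\boldsymbol{\beta}),1))$, with the combined constants summing to $9$. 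Fourth, the other branch of the minimum, $\sigma^2 R/n$, comes from selecting $\textbf{p}$ with $\|\textbf{p}\|_0 = M$ (so $\pi_{\textbf{p},c} = 1/2$): the OLS projection is then onto the column space of $\textbf{X}$, which has rank $R$, and its dimension-based penalty contributes precisely $\sigma^2 R/n$ after the $1/n$ scaling. Taking the minimum over the two branches and the infimum over $\boldsymbol{\beta}$ produces the stated inequality.

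The main obstacle I anticipate is the third step: combining the SURE penalty $+2\sigma^2\|\textbf{p}\|_0$ with $-4\sigma^2\log\pi_{\textbf{p},c}$ under Assumption~\ref{assumptionLOS} while controlling the constants sharply enough to obtain the clean factor $9$. Without the assumption, a crude bound like $\|\textbf{p}\|_0 \leq \|\textbf{p}\|_{0,c}$ would inject an additional $\log(eM)$ factor that is absent from the statement, so the assumption must be used in its full logarithmic strength rather than as a blunt dominance bound. Everything else is bookkeeping on standard OLS projection inequalities and the two cases of the prior.
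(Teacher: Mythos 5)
Your overall route is the same as the paper's: the supplement proves this as a special case of a general result (Proposition~\ref{general31}) obtained by chaining (i) the Leung--Barron exponential-weights oracle inequality in terms of the prior (Lemma~\ref{general31lemma2}), (ii) a risk bound for each restricted OLS estimator (Lemma~\ref{general31lemma1}), (iii) an explicit bound $\frac{4\sigma^2}{n}\log(\pi_{\textbf{p},c}^{-1}) \leq \frac{8\sigma^2 M_c(\textbf{p})}{n}\log\bigl(1+\frac{eM}{\max(M_c(\textbf{p}),1)}\bigr) + \frac{8\sigma^2}{n}\log 2$ using $H_c\leq 4$ (Lemma~\ref{general31lemma3}), and (iv) Assumption~\ref{assumptionLOS} to convert the residual $\|\textbf{p}\|_0$ term onto the $M_c$ scale, plus a separate treatment of the $\|\textbf{p}\|_0 = M$ branch with $\pi = 1/2$. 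Your four moves map onto these almost exactly, and your identification of where the assumption must be used "in its full logarithmic strength" is precisely how the paper uses it.

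There is, however, one concrete accounting problem in your version, and it sits exactly at the step you flagged as the obstacle. You bound the candidate model's contribution by the training-error comparison $\|\textbf{y}-\textbf{X}\widehat{\boldsymbol{\beta}}_{\textbf{p}}\|_2^2 \leq \|\textbf{y}-\textbf{X}\boldsymbol{\beta}\|_2^2$ plus the explicit SURE correction $+2\sigma^2\|\textbf{p}\|_0$. That route carries a coefficient of $2$ on the dimension term, so after applying Assumption~\ref{assumptionLOS} you get $2 M_c(\boldsymbol{\beta})\log(\cdot)$ from the correction plus $8 M_c(\boldsymbol{\beta})\log(\cdot)$ from the prior, i.e.\ a total constant of $10$, not $9$; the same looseness turns the full-rank branch into $2\sigma^2 R/n$ rather than $\sigma^2 R/n$. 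The paper avoids this by stating the oracle inequality directly in terms of the \emph{risk} $\mathbb{E}\|\textbf{X}\widehat{\boldsymbol{\beta}}_{\textbf{p}}-\textbf{y}\|_2^2$ of each candidate and then invoking the orthogonal-projection identity (Lemma~\ref{general31lemma1}), which gives $\mathbb{E}\|\textbf{X}\widehat{\boldsymbol{\beta}}_{\textbf{p}}-\textbf{y}\|_2^2 \leq \min_{\mathrm{supp}(\boldsymbol{\beta})\subseteq\mathrm{supp}(\textbf{p})}\|\textbf{X}\boldsymbol{\beta}-\textbf{y}\|_2^2 + \sigma^2\min(\|\textbf{p}\|_0,R)/n$ with coefficient $1$, whence $8+1=9$. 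Replacing your step one (and the attendant $+2\sigma^2\|\textbf{p}\|_0$ bookkeeping) with that sharper projection-risk bound closes the gap; everything else in your argument is sound.
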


A key property of the next proposition is the existence of some $\gamma \geq 1$ such that $\forall \textbf{p} \in \mathcal{P}: \|\textbf{p}\|_0 \leq \|\textbf{p}\|_{0,c} \leq \gamma \| \textbf{p}\|_{0}$.

\begin{prp}
\label{prop2pen}
Suppose Assumption~\ref{assumptionLOS} holds. 
Suppose the structural penalty in structured sparsity aggregate (SSA) estimator satisfies $\forall \textbf{p} \in \mathcal{P}: \|\textbf{p}\|_0 \leq \|\textbf{p}\|_{0,c} \leq \gamma \| \textbf{p}\|_{0}$ for some $\gamma \geq 1$.  Then for any
$M\geq 1, n \geq 1$, the SSA estimator
satisfies:
\begin{align}
 \mathbb{E}\|\textbf{X}{\widehat{\boldsymbol{\beta}}^{SSA}} - \textbf{y} \|^2_2 \leq \min_{\boldsymbol{\beta} \in
   \mathbb{R}^M} \{ \|\textbf{X}{\boldsymbol{\beta}} - \textbf{y}\|^2_2 + \phi_{n,M}(\boldsymbol{\beta}) \} +
 \frac{\sigma^2}{n}(9\log(1+eM) + 8\log2)
\end{align}
where $\phi_{n,M}(0) := 0$ and for $\boldsymbol{\beta} \neq 0$:
\begin{align}
\phi_{n,M} = \min\left[ \frac{\sigma^2}{n},\frac{9\sigma^2
    M_c(\boldsymbol{\beta})}{n} \log \left( 1+
    \frac{eM}{\max(M_c(\boldsymbol{\beta}),1)}\right), \frac{11\sigma
    \sqrt{\gamma}\|\boldsymbol{\beta}\|_1}{\sqrt{n}}\sqrt{\log \left( 1 +
      \frac{3eM\sigma}{\|\boldsymbol{\beta}\|_1\sqrt{\gamma n}}\right)}  \ \right]
\end{align}
\end{prp}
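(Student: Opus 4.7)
The plan is to reduce to Proposition~\ref{prop1pen}, which already yields the first two arguments of $\phi_{n,M}(\boldsymbol{\beta})$: the trivial $\sigma^2/n$ bound is obtained by evaluating the oracle inequality at $\boldsymbol{\beta}$ itself, and the $\ell_0$--structured term $9\sigma^2 M_c(\boldsymbol{\beta})/n\,\log(\cdots)$ appears verbatim. So the work is confined to producing the third, $\ell_1$--type branch involving $\sqrt{\gamma}\|\boldsymbol{\beta}\|_1/\sqrt{n}$, where the extra hypothesis $\|\textbf{p}\|_{0,c}\leq \gamma\|\textbf{p}\|_0$ must be exploited. The degenerate case $\boldsymbol{\beta}=0$ is handled separately by setting $\boldsymbol{\beta}=0$ in Proposition~\ref{prop1pen}, with $M_c(0)=0$, and absorbing the result into the additive residual $9\sigma^2\log(1+eM)/n$ that enlarges Proposition~\ref{prop2pen} relative to Proposition~\ref{prop1pen}.

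For the $\ell_1$ branch I would use a Maurey--type stochastic approximation in the spirit of the argument of~\cite{RigTsy10}. Fix $\boldsymbol{\beta}\neq 0$ and an integer $1\leq k\leq R$, and sample $\boldsymbol{\beta}'$ as the average of $k$ i.i.d.\ draws from the distribution that places mass $|\beta_i|/\|\boldsymbol{\beta}\|_1$ on the atom $\|\boldsymbol{\beta}\|_1\,\mathrm{sign}(\beta_i)\,\mathbf{e}_i$. By construction $\|\boldsymbol{\beta}'\|_0\leq k$ almost surely, $\mathbb{E}_{\boldsymbol{\beta}'}[\textbf{X}\boldsymbol{\beta}']=\textbf{X}\boldsymbol{\beta}$, and independence together with the column normalization $\|\textbf{x}_i\|_2^2\leq 1$ give $\mathbb{E}_{\boldsymbol{\beta}'}\|\textbf{X}(\boldsymbol{\beta}-\boldsymbol{\beta}')\|_2^2\leq \|\boldsymbol{\beta}\|_1^2/k$. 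Applying Proposition~\ref{prop1pen} at each realization of $\boldsymbol{\beta}'$, retaining only the $\ell_0$--structured branch of its inner minimum, and taking expectation in $\boldsymbol{\beta}'$ so that the mean--preserving property cancels the cross term in $\mathbb{E}_{\boldsymbol{\beta}'}\|\textbf{X}\boldsymbol{\beta}'-\textbf{y}\|_2^2=\|\textbf{X}\boldsymbol{\beta}-\textbf{y}\|_2^2+\mathbb{E}_{\boldsymbol{\beta}'}\|\textbf{X}(\boldsymbol{\beta}-\boldsymbol{\beta}')\|_2^2$, and then invoking the hypothesis $M_c(\boldsymbol{\beta}')\leq\gamma\|\boldsymbol{\beta}'\|_0\leq \gamma k$ together with monotonicity of $t\mapsto t\log(1+eM/t)$, I obtain
\begin{align*}
\mathbb{E}\|\textbf{X}\widehat{\boldsymbol{\beta}}^{SSA}-\textbf{y}\|_2^2 \leq \|\textbf{X}\boldsymbol{\beta}-\textbf{y}\|_2^2 + \frac{\|\boldsymbol{\beta}\|_1^2}{k} + \frac{9\sigma^2\gamma k}{n}\log\left(1+\frac{eM}{\gamma k}\right) + \frac{8\sigma^2\log 2}{n}.
\end{align*}
The third branch of $\phi_{n,M}$ then emerges by choosing $k=\lceil\|\boldsymbol{\beta}\|_1\sqrt{n}/(3\sigma\sqrt{\gamma})\rceil$, which balances the two $\boldsymbol{\beta}$--dependent summands and produces the log argument $1+3eM\sigma/(\|\boldsymbol{\beta}\|_1\sqrt{\gamma n})$ shown in the statement; the constant $11$ absorbs the ceiling and the two balanced contributions.

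The main obstacle is the boundary bookkeeping around this choice of $k$. One must separately dispose of the regime where the unconstrained optimizer satisfies $k^{*}<1$, in which case the $\ell_1$ bound degenerates and must be dominated by the inflated residual $9\sigma^2\log(1+eM)/n$ -- which is precisely why that residual appears in Proposition~\ref{prop2pen} and not in Proposition~\ref{prop1pen} -- and the regime $k^{*}>R$, in which truncation at $k=R$ makes the $\ell_0$--structured (rank--capped) branch binding. The only genuinely new ingredient beyond the argument of~\cite{RigTsy10} is the role of $\gamma$: it enters linearly inside the log and as a prefactor of $9\sigma^2 k/n$, and after optimizing in $k$ it redistributes as the $\sqrt{\gamma}$ prefactor and the $1/\sqrt{\gamma n}$ log denominator seen in the statement. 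Assumption~\ref{assumptionLOS} itself plays no direct role in the Maurey step, being already inherited through Proposition~\ref{prop1pen}.
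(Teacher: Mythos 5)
Your proposal is correct and follows essentially the same route as the paper: the paper verifies the Maurey-type approximation bound (its Lemma~\ref{lemma81penalizedL0}, i.e.\ Lemma~8.1 of \cite{RigTsy10}), feeds it into the general Lemma~\ref{lemma82genHelp} where the sandwich $\|\textbf{p}\|_0 \leq \|\textbf{p}\|_{0,c} \leq \gamma\|\textbf{p}\|_0$ produces the $\gamma$ factor exactly as in your monotonicity step, and then optimizes over $k^*$ in Proposition~\ref{general82} to obtain the $\sqrt{\gamma}$ prefactor and the boundary cases you describe. You have simply inlined the randomized approximation and the choice of $k$ rather than routing them through the paper's general-purpose lemmas, but the decomposition, the role of $\gamma$, and the treatment of the degenerate regimes are the same.
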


\subsection{Grouped $\ell_0$ Norm}
\label{TheoryS2}

We first state an
Assumption:
\begin{asm}
\label{assumptionGRPLO}
For all $\textbf{p} \in \mathcal{P}$ where $R>\|\textbf{p}\|_0>0$:
\begin{align}
\frac{\|\textbf{p}\|_0}{\|\textbf{p}\|_{0,\mathcal{G}}} \leq
\log\left( 1 + \frac{e|\mathcal{G}|}{\max(\|\textbf{p}\|_{0,\mathcal{G}},1)}\right).
\end{align}
\end{asm}
This assumption does not hold uniformly for all sparsity patterns and for all choices of
$\mathcal{G}$. A sufficient condition for the assumption is:
\begin{align}
\max_{g \in \mathcal{G}} |g| \leq \log(1 + e|\mathcal{G}|/R).
\end{align}
In particular, for sparsity patterns with low $\ell_0$ norm relative to $M$, the
assumption is satisfied provided the cardinality of $\mathcal{G}$ is large enough.  



\begin{prp} 
\label{prop1group}
Suppose Assumption~\ref{assumptionGRPLO} holds. For any $M\geq1, n \geq1$, the grouped sparsity aggregate
estimator satisfies:
\begin{align}
\mathbb{E} \|\textbf{X}{\widehat{\boldsymbol{\beta}}^{GSA}} - \textbf{y} \|^2_2 \leq 
\min_{\boldsymbol{\beta} \in \mathbb{R}^M} \left\{ \|\textbf{X}{\boldsymbol{\beta}} - \textbf{y}\|^2_2 + \min
  \left\{\frac{\sigma^2 R}{n}, \mbox{ } 9 \sigma^2
  \frac{M_\mathcal{G}(\boldsymbol{\beta})}{n} \log \left( 1 + 
    \frac{e|\mathcal{G}|}{\max(M_\mathcal{G}(\boldsymbol{\beta}), 1)}\right) \right\} \right\} +
\frac{8\sigma^2}{n}\log 2
\end{align}
Here, $R = \mbox{rank}(\textbf{X})$, and $M_\mathcal{G}(\boldsymbol{\beta}) = \|\mbox{sparsity}(\boldsymbol{\beta})\|_{0,\mathcal{G}}$, where $\mbox{sparsity}(\boldsymbol{\beta})$ is the sparsity pattern of $\boldsymbol{\beta}$.
\end{prp}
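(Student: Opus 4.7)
The plan is to derive Proposition~\ref{prop1group} by specializing the general aggregation oracle inequality established in the supplementary material to the group-structured prior $\pi_{\textbf{p},\mathcal{G}}$ of Equation~\ref{SSPA_GROUPED_PRIOR}. Structurally, the proof mirrors that of Proposition~\ref{prop1pen}: one replaces $\|\cdot\|_{0,c}$ by $\|\cdot\|_{0,\mathcal{G}}$, $M$ by $|\mathcal{G}|$ in the prior complexity term, $H_c$ by $H_\mathcal{G}$ (both bounded by $4$), and Assumption~\ref{assumptionLOS} by Assumption~\ref{assumptionGRPLO}. Schematically, the general bound should read
$$\mathbb{E}\|\textbf{X}\widehat{\boldsymbol{\beta}}^{GSA} - \textbf{y}\|^2_2 \leq \min_{\textbf{p}\in\mathcal{P}} \Big\{ \|\textbf{X}\widehat{\boldsymbol{\beta}}_\textbf{p} - \textbf{y}\|^2_2 + \tfrac{c_1\sigma^2}{n}\|\textbf{p}\|_0 - \tfrac{c_2\sigma^2}{n}\log \pi_{\textbf{p},\mathcal{G}} \Big\} + \tfrac{8\sigma^2}{n}\log 2,$$
where the $\|\textbf{p}\|_0$ contribution comes from the $-\|\textbf{p}\|_0/2$ exponent (a Stein-type unbiased risk correction for the OLS projection) and the $-\log\pi$ term is the PAC-Bayes complexity. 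The remaining work is to evaluate the right-hand side at two specific patterns for each fixed $\boldsymbol{\beta}$, then pull $\min_{\boldsymbol{\beta}}$ outside.

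To obtain the branch $\sigma^2 R/n$, I would pick $\textbf{p}$ corresponding to the full model (the second branch of $\pi_{\textbf{p},\mathcal{G}}$, with mass $1/2$). The OLS on this pattern achieves training error no worse than $\|\textbf{X}\boldsymbol{\beta} - \textbf{y}\|^2_2$ for any $\boldsymbol{\beta}$, its projection variance contributes $\sigma^2 R$ rather than $\sigma^2 M$ because the column space of $\textbf{X}$ has rank $R$, and $-\log(1/2)=\log 2$ is absorbed into the additive residual. To obtain the second branch, I would pick $\textbf{p} = \mbox{sparsity}(\boldsymbol{\beta})$, so that $\|\textbf{p}\|_{0,\mathcal{G}} = M_\mathcal{G}(\boldsymbol{\beta})$ and $\widehat{\boldsymbol{\beta}}_\textbf{p}$ is a candidate OLS no worse than $\boldsymbol{\beta}$ in training error. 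For this $\textbf{p}$, direct evaluation of the first branch of $\pi_{\textbf{p},\mathcal{G}}$ gives
$$-\log \pi_{\textbf{p},\mathcal{G}} = \log H_\mathcal{G} + M_\mathcal{G}(\boldsymbol{\beta})\,\log\!\Big(\tfrac{2e|\mathcal{G}|}{M_\mathcal{G}(\boldsymbol{\beta})}\Big),$$
with $H_\mathcal{G} \leq 4$. Substituting into the general inequality and collecting constants should yield the desired term $9\sigma^2 M_\mathcal{G}(\boldsymbol{\beta})/n\,\log(1 + e|\mathcal{G}|/\max(M_\mathcal{G}(\boldsymbol{\beta}),1))$.

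The main obstacle, and the entire motivation for Assumption~\ref{assumptionGRPLO}, is reconciling the $\|\textbf{p}\|_0$ factor (the dimension of the OLS projection) with the $\|\textbf{p}\|_{0,\mathcal{G}}$ factor (the group-sparsity level used in the prior): the group-based prior alone does not control the Stein correction. Assumption~\ref{assumptionGRPLO} bounds the ratio $\|\textbf{p}\|_0/\|\textbf{p}\|_{0,\mathcal{G}}$ by exactly the logarithmic factor $\log(1 + e|\mathcal{G}|/\max(\|\textbf{p}\|_{0,\mathcal{G}},1))$ that appears multiplying $\|\textbf{p}\|_{0,\mathcal{G}}$ in the prior complexity, so the two penalty pieces can be combined into a single term of the advertised form. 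The remainder is constant-chasing — tracking the $\log 2$, $\log e$, and factor-of-$2$ contributions that ultimately yield the coefficient $9$ — with no genuinely new idea beyond the Rigollet--Tsybakov argument required, which is precisely what makes the parallel with Proposition~\ref{prop1pen} clean.
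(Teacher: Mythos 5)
Your proposal is correct and follows essentially the same route as the paper's own proof: the paper proves a general oracle inequality (Proposition~\ref{general31}) by combining the Leung--Barron aggregation bound (Lemma~\ref{general31lemma2}) with the OLS risk bound (Lemma~\ref{general31lemma1}) and the prior evaluation (Lemma~\ref{general31lemma3}), evaluating at the full-rank pattern and at the sparsity pattern of $\boldsymbol{\beta}$, exactly as you describe, and Proposition~\ref{prop1group} is the specialization with $\mathcal{M} = \|\cdot\|_{0,\mathcal{G}}$ and $C = |\mathcal{G}|$. You correctly identify the one nontrivial point --- that Assumption~\ref{assumptionGRPLO} is what lets the $\|\textbf{p}\|_0$ Stein-type correction be absorbed into the $M_\mathcal{G}(\boldsymbol{\beta})\log(1+e|\mathcal{G}|/\max(M_\mathcal{G}(\boldsymbol{\beta}),1))$ term, yielding $9 = 8 + 1$.
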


\begin{prp}
\label{prop2group}
Suppose Assumption~\ref{assumptionGRPLO} holds. 
Then for any
$M\geq 1, n \geq 1$, the grouped sparsity aggregate estimator
satisfies:
\begin{align}
 \mathbb{E}\|\textbf{X}{\widehat{\boldsymbol{\beta}}^{GSA}} - \textbf{y} \|^2_2 \leq \min_{\boldsymbol{\beta} \in
   \mathbb{R}^M} \{ \|\textbf{X}{\boldsymbol{\beta}} - \textbf{y}\|^2_2 + \phi_{n,\mathcal{G}}(\boldsymbol{\beta}) \} +
 \frac{\sigma^2}{n}(9\log(1+e|\mathcal{G}|) + 8\log2)
\end{align}
where $\phi_{n,\mathcal{G}}(0) := 0$ and for $\boldsymbol{\beta} \neq 0$:
\begin{align}
\phi_{n,\mathcal{G}} = \min\left[ \frac{\sigma^2}{n},\frac{9\sigma^2
    M_\mathcal{G}(\boldsymbol{\beta})}{n} \log \left( 1+
    \frac{e|\mathcal{G}|}{\max(M_\mathcal{G}(\boldsymbol{\beta}),1)}\right), \frac{11\sigma
    \|\boldsymbol{\beta}\|_{1,\mathcal{G}}}{\sqrt{n}}\sqrt{\log \left( 1 +
      \frac{3e|\mathcal{G}|\sigma}{\|\boldsymbol{\beta}\|_{1,\mathcal{G}}\sqrt{n}}\right)}  \ \right]
\end{align}
\end{prp}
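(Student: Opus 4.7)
My plan is to mirror the proof of Proposition~\ref{prop2pen} in the grouped setting, with the triple $(\|\cdot\|_0, \|\cdot\|_1, M)$ replaced uniformly by $(\|\cdot\|_{0,\mathcal{G}}, \|\cdot\|_{1,\mathcal{G}}, |\mathcal{G}|)$. The factor $\gamma$ does not appear because the grouped $\ell_1$ norm is by construction compatible with the grouped $\ell_0$ counting used in $\pi_{\cdot,\mathcal{G}}$. The skeleton is to apply the general Gibbs/PAC-Bayesian oracle inequality from the supplementary material to the posterior over sparsity patterns induced by $\pi_{\cdot,\mathcal{G}}$, and then to optimize the resulting bound over three canonical competitors that yield the three branches of $\phi_{n,\mathcal{G}}$.

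The first two branches follow from Proposition~\ref{prop1group} with very little extra work. The abstract inequality yields a bound of the form
\begin{align*}
\mathbb{E}\|\textbf{X}\widehat{\boldsymbol{\beta}}^{GSA} - \textbf{y}\|_2^2 \leq \inf_{\textbf{p} \in \mathcal{P}} \Bigl\{ \|\textbf{X}\widehat{\boldsymbol{\beta}}_{\textbf{p}} - \textbf{y}\|_2^2 + \mathrm{pen}(\textbf{p}) \Bigr\} + \mathrm{const},
\end{align*}
where $\mathrm{pen}(\textbf{p})$ has the shape $\sigma^2\|\textbf{p}\|_0/n + (4\sigma^2/n)\log(1/\pi_{\textbf{p},\mathcal{G}})$. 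Assumption~\ref{assumptionGRPLO} then converts this composite penalty into a purely grouped quantity of order $\|\textbf{p}\|_{0,\mathcal{G}}\log(1 + e|\mathcal{G}|/\max(\|\textbf{p}\|_{0,\mathcal{G}},1))$; evaluating at a $\textbf{p}$ whose group support covers $\mathrm{supp}(\boldsymbol{\beta})$ produces the $M_{\mathcal{G}}(\boldsymbol{\beta})$ branch, while the universal $\sigma^2/n$ branch appears on taking $\textbf{p}$ concentrated on a single group and absorbing the resulting $\sigma^2\log(1 + e|\mathcal{G}|)/n$ into the global constant.

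The real work is the $\|\boldsymbol{\beta}\|_{1,\mathcal{G}}$ branch, for which I would carry out a randomized approximation argument analogous to the $\ell_1$ step of~\cite{RigTsy10} lifted to groups. Fix $\boldsymbol{\beta} \neq 0$ and a decomposition $\{\textbf{v}_g\}_{g \in \mathcal{G}} \in \mathcal{V}_{\mathcal{G}}(\boldsymbol{\beta})$ attaining the minimum in~(\ref{groupedEllOneNorm}). For an integer $k$ to be chosen, define a distribution $\rho_k$ on $\mathcal{P}$ that samples a group-pattern $\textbf{p}$ by including each group $g$ independently with probability $\min(1, k\|\textbf{v}_g\|_2/\|\boldsymbol{\beta}\|_{1,\mathcal{G}})$; the expected number of active groups under $\rho_k$ is $O(k)$, so $\mathrm{KL}(\rho_k\,\|\,\pi_{\cdot,\mathcal{G}})$ is of order $k\log(e|\mathcal{G}|/k)$ by standard entropy computations with~(\ref{SSPA_GROUPED_PRIOR}). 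In parallel, the expected approximation error $\mathbb{E}_{\rho_k}\|\textbf{X}(\boldsymbol{\beta} - \sum_{g:\textbf{p}_g=1}\alpha_g\textbf{v}_g)\|_2^2$ for a suitable rescaling $\alpha_g$ is controlled by $\|\boldsymbol{\beta}\|_{1,\mathcal{G}}^2/k$ via the normalization $\|\textbf{x}_i\|_2 \leq 1$ together with orthogonality of independent inclusion. Inserting both estimates into the PAC-Bayesian bound and optimizing $k \asymp \|\boldsymbol{\beta}\|_{1,\mathcal{G}}\sqrt{n}/\sigma$ (clipped at $|\mathcal{G}|$) produces the displayed $\|\boldsymbol{\beta}\|_{1,\mathcal{G}}\sqrt{\log(\cdots)}$ rate, with the constant $11$ fixed by the Bernstein-type step.

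The main obstacle is this approximation step, because $\mathcal{V}_{\mathcal{G}}(\boldsymbol{\beta})$ need not be unique when groups overlap and because the aggregate is built from OLS fits $\widehat{\boldsymbol{\beta}}_{\textbf{p}}$ on $\mathrm{supp}(\textbf{p})$ rather than from the convex combinations $\sum\alpha_g\textbf{v}_g$ that appear in the randomization. The non-uniqueness is handled by fixing any minimizer in~(\ref{groupedEllOneNorm}); the mismatch between OLS and decomposition is handled by noting that, since $\widehat{\boldsymbol{\beta}}_{\textbf{p}}$ is the in-subspace minimizer of $\|\textbf{X}\boldsymbol{\beta}' - \textbf{y}\|_2^2$ over $\mathrm{supp}(\boldsymbol{\beta}')\subseteq\mathrm{supp}(\textbf{p})$, any $\boldsymbol{\beta}'$ with support in $\cup_{g:\textbf{p}_g=1} g \subseteq \mathrm{supp}(\textbf{p})$ is an admissible competitor and hence gives a valid upper bound on $\|\textbf{X}\widehat{\boldsymbol{\beta}}_{\textbf{p}} - \textbf{y}\|_2^2$. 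Once these points are settled, the remainder reduces to routine Gibbs/Bernstein bookkeeping to match the constants $9$, $11$, and $8\log 2$ in the stated form of $\phi_{n,\mathcal{G}}$.
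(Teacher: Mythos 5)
Your skeleton matches the paper's: the first two branches of $\phi_{n,\mathcal{G}}$ do come from the grouped analogue of Proposition~\ref{prop1group} (the paper's general Proposition~\ref{general31} with $C=|\mathcal{G}|$ and Assumption~\ref{assumptionGRPLO} absorbing the extra $\sigma^2\|\textbf{p}\|_0/n$ from the OLS fits), and the grouped-$\ell_1$ branch is indeed the real work and is proved by a Maurey-type randomized approximation over groups. Where you diverge is in the implementation of that randomization. The paper does not use independent Bernoulli inclusion plus a KL computation against $\pi_{\cdot,\mathcal{G}}$; it stays entirely within the pointwise oracle inequality (Lemma~\ref{general31lemma2}) and uses the randomization only as a probabilistic-method existence argument: it draws a multinomial vector $\kappa\sim\mathbb{M}(K,\textbf{q})$ with $q_g=\|\textbf{v}_g^*\|_2/\|\boldsymbol{\beta}^*\|_{1,\mathcal{G}}$, builds an unbiased approximant supported on at most $K$ groups, bounds the variance by $\|\boldsymbol{\beta}^*\|_{1,\mathcal{G}}^2/K$ (Lemma~\ref{lemma81groupedL0}), and then feeds this into a deterministic chaining lemma (Lemma~\ref{lemma82genHelp}, with $\gamma=1$ here) that optimizes over the sparsity level $k$ and inherits the constants $9$ and $11$ verbatim from~\cite{RigTsy10}. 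Your Bernoulli-plus-KL route is a legitimate soft PAC-Bayes variant and the diagonal variance computation comes out to the same $\|\boldsymbol{\beta}\|_{1,\mathcal{G}}^2/k$, but it forces you to re-derive the entropy term for a product measure against the prior~(\ref{SSPA_GROUPED_PRIOR}) (whose support is restricted to $\|\textbf{p}\|_0\leq R$, an edge case your random pattern can violate) and to re-fix the constants, whereas the paper's route buys exact reuse of the Rigollet--Tsybakov machinery. One ingredient you omit that the paper genuinely needs: because Lemma~\ref{lemma82genHelp} minimizes over the class $\{\boldsymbol{\beta}:\|\boldsymbol{\beta}\|_{1,\mathcal{G}}=\|\boldsymbol{\beta}^*\|_{1,\mathcal{G}},\ \|\boldsymbol{\beta}\|_{0,\mathcal{G}}\leq k\}$, the randomized approximant must preserve the grouped $\ell_1$ norm, which the paper gets from Corollary~1 of~\cite{Jacob:2009:GLO:1553374.1553431}; in your formulation this is not required, but that is precisely because you have replaced the paper's deterministic reduction with an unproven KL bookkeeping step.
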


\subsection{Discussion of the Results}

For each class of prior, we give two main results.  The first result
shows that each procedure enjoys adaptation in terms of 
the appropriate structured sparsity measuring set function --- $\|\cdot\|_{0,c}$ and
$\|\cdot\|_{0,\mathcal{G}}$, respectively.  The bound is thus best when the structured sparsity of the regression function is small, as measured by the appropriate set functions; the estimator adapts to the structured sparsity of the target. The second demonstrates that the estimators also adapts to structured sparsity measured in terms of a corresponding convex norm ---
$\|\cdot\|_1$ and $\|\cdot\|_{1,\mathcal{G}}$.  This is useful when some entries
of $\boldsymbol{\beta}$ contribute little to the convex norm, but still
incur a penalty in the corresponding set function. For
example, a small isolated entry of $\boldsymbol{\beta}$ contributes little to the $\ell_1$
norm, but is heavily weighted in the structurally
penalized $\ell_0$ norm.  

Comparing the results to the corresponding results in~\cite{RigTsy10}, these results reveal some benefits and drawbacks to adding structure to the sparse aggregation procedure. In the penalized case, the results show that the structured estimator enjoys the same rates as the sparse estimator when the penalty is low. When structure is not present in the target, the sparse estimator is superior, as expected. Proposition~\ref{prop2pen} is still given in terms of the $\ell_1$ norm, which only measures sparsity. The price for adding structure to the procedure appears in the additional factor of $\sqrt{\gamma}$. While these results are not dramatic, the previous discussion (Section~\ref{structPenalizedNorm}) and subsequent simulation study (Section~\ref{applicationSimHIV-sim}) show that the penalized version is flexible and powerful in practice. 

In the grouped case, the results are more appealing. Since the grouped $\ell_0$ and $\ell_1$ norms are potentially much smaller than their ungrouped counterparts, the results here give better constants than their sparse versions. These improvements may be dramatic: previous work on the grouped lasso, cf~\cite{Lounici_takingadvantage},~\cite{zhang_benefits}, revealed great benefits to grouped structures. Following the settings of~\cite{Lounici_takingadvantage}, consider a multi-task regression setting in which we desire the same sparsity pattern across tasks.  Then, if the number of tasks is on the same order or of a higher greater than the number of samples per task ($n$), a grouped aggregation approach would reduce the order (in $n$) of the rates in the theoretical results. We can also expect such improvements for an overlapping set of groups that do not highly overlap.

%

The propositions given in the previous subsections are simplified
versions of those proved for the sparsity pattern aggregate estimator
in~\cite{RigTsy10}. We note that the full results can be extended to our
estimators, we omit the derivation for brevity. In addition
to these more complex statements, \cite{RigTsy10} also gave a detailed
theoretical discussion of these results in comparison to the lasso and
 BIC aggregation estimators~\cite{buneaAggregation}, concluding that their estimator enjoyed superior and near optimal rates. Since our rates differ by no more than constants when the target is truly structured and sparse, we conclude that in such settings a structured approach can give great benefits.

\section{Applications}
\label{applicationSimHIV}

\subsection{Simulation Study}
\label{applicationSimHIV-sim}

We now turn to a simulation study.  ~\cite{RigTsy10}
presented a detailed simulation study comparing their sparsity pattern
aggregate estimator --- see Section~\ref{ES} --- to
numerous sparse regression methods. They demonstrated that the sparsity
pattern aggregate was superior to the competitor methods. Therefore,
we primarily compare our technique to the sparsity pattern aggregate estimator. We will show that the structured sparsity pattern aggregate estimator is superior under appropriate settings where the target is structured.

For brevity, we consider only the structurally penalized $\ell_0$ norm.  In the following, we employ
our cluster counting penalty, described in
Section~\ref{structPenalizedNorm}, with $h=3$.  We consider two settings that offer natural geometries and notions of structure: connected
components in a line structure (see, e.g. the top
left display in Figure~\ref{oneDimOneRunFigure}), and blocks in a
two-dimensional lattice (see, e.g. the top left
display in Figure~\ref{twoDimOneRunFigure}). Using these natural geometries, we let $d(\cdot, \cdot)$ be euclidean distance. We uniformly at random set the appropriate entries of a
true coefficient vector $\overline{\boldsymbol{\beta}}$ to be one of $\{+1,-1\}$. Each entry
of the $n \times M$ design matrix $\textbf{X}$ are independent
standard random normal variables.  We additionally generate a $n
\times M$ matrix $\textbf{X}_{test}$ to measure prediction
performance, see below. We consider different values of $n$ --- the
number of data points, $M$ --- the number of candidate covariates; represented as columns in $\textbf{X}$, $C$
--- the number of
clusters as measured by the cluster counting penalty applied to the true sparsity pattern, and $C_{on}$ --- the
number of nonzero entries per cluster in $\overline{\boldsymbol{\beta}}$.  We enforce non overlapping clusters giving $\|\overline{\boldsymbol{\beta}}\|_0 = C\times C_{on}$.  For direct comparison we follow~\cite{RigTsy10}, and set the noise level $\sigma = \|\overline{\boldsymbol{\beta}}\|_0/9$,
and run the MCMC algorithm for 7000 iterations, discarding the first 3000. We repeat each simulation setting 250 times.

We use two metrics to
measure performance. First, prediction risk:
\begin{align}
\mbox{Prediction}(\widehat{\boldsymbol{\beta}}) := \frac{\|\textbf{X}_{\mbox{test}}^T(\overline{\boldsymbol{\beta}} - \widehat{\boldsymbol{\beta}})\|_2^2}{n}.
\end{align}
Our second metric measures the estimation of $\overline{\boldsymbol{\beta}}$:
\begin{align}
\mbox{Recovery}(\widehat{\boldsymbol{\beta}}) := \frac{\|\overline{\boldsymbol{\beta}} - \widehat{\boldsymbol{\beta}}\|_2^2}{\|\overline{\boldsymbol{\beta}}\|_2^2}.
\end{align}
In each of the above, $\widehat{\boldsymbol{\beta}}$ denotes some estimate of
$\overline{\boldsymbol{\beta}}$.  We compare
against our structured sparsity aggregate estimator (SSA) against the
sparsity pattern aggregate estimator (SPA) and the lasso
(lasso) --- note that the true coefficients, while clustered, are not smooth, making these settings inappropriate applications for structured smooth estimators such as the 1d or 2d fused lasso~\citep*{fusedLasso}.  For the lasso, we we choose the tuning parameter
$\lambda$ using 10-fold cross validation, and refit the model using ordinary least squares
regression, both within and outside cross validation. This strategy effectively uses the lasso only for its variable
selection properties and avoids shrinkage in $\widehat{\boldsymbol{\beta}}$.  We employ the
R package {\tt glmnet}~\citep{glmnetRpackage} to fit the lasso.

Tables~\ref{penalizedEllZeroSimulationOneD}
and~\ref{penalizedEllZeroSimulationTwoD} display the results. In all cases, the structured sparse estimator is
superior to the sparse estimator, and both methods are superior
to the lasso.  Although the mean prediction and recovery for the
aggregation estimators are within two standard errors of each
other, for paired runs on the same simulated data set, the structured
sparse estimator is superior in both metrics at least 95\% of the
time, for all settings.  Figures~\ref{oneDimOneRunFigure} and~\ref{twoDimOneRunFigure} display results for a sample sparsity pattern in both settings.  We can clearly see the superiority of the aggregation methods over the lasso. In both figures, we see that both aggregation methods correctly estimated the true sparsity pattern. However, in the sparse estimator, the Markov chain spent many iterations adding and dropping covariates far away from the true clusters. This did not happen in the structured estimators, giving a much sharper picture of the sparsity pattern in both cases. Rejecting these wandering steps gave the structured estimator better numerical performance in both prediction and estimation.

\subsection{Application to HIV Drug Resistance}
 
We now explore a data application which calls for a structured
sparse approach. Standard drug therapy for Human Immunodeficiency Virus (HIV) inhibits the activity of proteins produced by the
virus. HIV is able to change its protein structure easily and
become resistant to the drugs. The goal is then to determine which mutations drive this
resistance. We use regression to determine
the relationship between a particular strain of HIV's resistance to a drug
 and its protein sequence. \cite{Soo} studied this
problem using sparse regression techniques.

Casting this problem as linear regression, the continuous response
is drug resistance, measured by $\log$ dosage of the drug needed to effectively
negate the virus' reproduction.  The covariates derive from the protein
sequences. Each sequence is 99 amino acids long, so we view
each of these 99 positions as factors. Breaking each of
these factors into levels, we obtain mutation covariates, which is our set of candidate predictors. If a location displays $A$ different amino acids across the data, we obtain $A-1$ mutation covariates.  Thus, each
covariate is an indicator variable for the occurrence of a particular
amino acid at a particular location in the protein sequence. Note that
many positions in the protein sequence display no variation throughout the data set --- these positions always display the same amino acid --- and are
therefore dropped from the analysis.
In summary, the predictors are mutations in the sequence, and
the response is the log dosage. A sparse model would show exactly which mutations are most important in driving resistance.  We are interested in which mutations predict drug resistance, rather than only which locations predict dug resistance.  Therefore, we do not select the mutation covariates from a location jointly.  We instead treat each mutation separately.

Additional biological information gives us reason to
believe a structured, sparse model is more appropriate. Proteins
typically function by active sites.  That is, localized areas of the
protein are more important to the protein function than
others. Viewing the sequence as a simple linear structure, we expect
that selected mutations should occur clustered in this
structure. We can cluster the mutations by defining a distance in straightforward way: since
each mutation covariate is also associated with a location, we can
define $d(\cdot, \cdot)$, the distance between a pair of mutation covariates, as the absolute
difference in their locations.  

We apply our structured sparse aggregation (SSA) method along with sparse
aggregation (SPA), forward stepwise regression, and the lasso to the data for
drug Saquinavir (SQV) ---
see~\cite*{database} for details on the data
and~\cite{percivalStructureSparse} for another structured sparse
approach to the analysis; the data are available as a data set in the R
package {\tt BLINDED}~\cite{casparRpackage}. We set $h=3$ in
our cluster counting structural penalty for the structured aggregation
method.

We display a comparison of the sparsity patterns for the
methods in Figure~\ref{hivFigure}.  We see that each method selects similar mutations.  As expected, the structured sparse estimator encourages clustered selection of mutations, giving us two clear important regions.  In contrast, the sparse aggregation estimator, stepwise regression, and the lasso suggest mutations across the protein sequence.

We finally evaluate the predictive performance of the four methods using data splitting. We split the
data into three equal groups, and compare the mean test error from
using each set of two groups as a training set, and the third as a
test set. Table~\ref{hivTable} shows that both aggregation estimators
are superior to the lasso and stepwise regression.  Although the mean
test error is lower for the sparse aggregation estimator, it is within
a single standard deviation of the structured estimator's mean test
error. Therefore, the structured estimator gives comparable predictive
power, with the extra benefit of superior biological interpretability. 


\section{Conclusion}
\label{conclude}

In this paper, we proposed simple modifications of a powerful sparse aggregation
technique, giving a framework for structured
sparse aggregation. We presented methods for two main classes of structured sparsity: set
function based structure and grouped based structure.  These
aggregation estimators place highest on weight models whose sparsity
patterns are the most sparse and structured. 
We showed that these estimators enjoy appropriate oracle
inequalities --- they adapt to the structured sparsity of the targets. Further, we showed that in practice these methods are effective in the appropriate setting.

In the theory throughout this paper, we
considered a particular structure in the prior in order to easily
compare theoretical properties with sparse estimators. In practice,
the form of the prior may be modified further. For example, we need
not restrict our structural penalty to be less than the number of
predictors. In our current formulation, this restriction forced us to
consider sparsity and structure with equal weight. 

Although both the sparsity pattern and structured sparsity pattern estimators display good promise theoretically and in practice, there are several practical challenges remaining.  First, while~\cite{RigTsy10} suggested a strategy for dealing with the assumption that $\sigma^2$ is known, it requires running another Markov chain to find a good estimate for $\sigma^2$. This strategy is slow, and the stochastic greedy algorithm is much slower than comparable sparse techniques.  While the algorithm is not prohibitively slow, speedups would greatly enhance its utility. Currently, the algorithm must be run for at least approximately $10 \times M$ iterations so that it is time to search over all $M$ covariates. Since each iteration requires an OLS regression fit, if $M$ is of the same or greater order than $n$, this is a significant drawback. Thus, the estimator does not scale well to high dimensions. In future work, we can also consider a specialized version of
the stochastic greedy algorithms adapted to our structured priors. 

\appendix
\section{Implementation of Aggregation Estimators}

\subsection{Metropolis Algorithm}

Here, we give the implementation of the sparsity pattern aggregation estimator, proposed by~\cite{RigTsy10}. This approach can be naturally adapted to the structured case. For numerical implementation, \cite{RigTsy10} consider the following simplified prior:
\begin{align}
\label{SPA_PRIOR_used}
\pi_{\textbf{p}}^* := \left\{
     \begin{array}{lr}
       \frac{1}{H^*} \left( \frac{||\textbf{p}||_0}{2eM}
       \right)^{||\textbf{p}||_0} & ||\textbf{p}||_0 \leq R\\
       0 & \mbox{else}
     \end{array}
   \right. .
 \end{align}

Initialize the algorithm by setting $\textbf{p}(1) = \textbf{0} \in
\mathcal{P}$.  Repeat the following steps for $t = 1,\ldots,T$.

\begin{enumerate}
\item Generate a random integer $i$ in the set $\{1,2,\ldots,M\}$ from a
  discrete uniform distribution. Set the proposal sparsity
  pattern  $\textbf{q}(t)$ as $\textbf{p}(t)$ with entries
  satisfying:
  \begin{align}
  \textbf{q}(t) := \left\{
    \begin{array}{lr}
      \textbf{p}(t)_j  & i \neq j\\
      1 - \textbf{p}(t)_j & i = j
    \end{array}
  \right.
\end{align}
That is, entry $i$ has been toggled from ``on'' to ``off'', or visa versa.
\item Compute $\widehat{\boldsymbol{\beta}}_{\textbf{p}(t)}$ and
  $\widehat{\boldsymbol{\beta}}_{\textbf{q}(t)}$, the least squares estimators under
  sparsity patterns $\textbf{p}(t)$ and $\textbf{q}(t)$, respectively. Let:
\begin{align}
r(t) &= \min\left( \frac{\nu_{\textbf{q}(t)}}{\nu_{\textbf{p}(t)}},1
\right),\\
\frac{\nu_{\textbf{q}(t)}}{\nu_{\textbf{p}(t)}}& = \exp \left(
  \frac{1}{4\sigma^2} \left(\mbox{Error}\left(\widehat{\boldsymbol{\beta}}_{\textbf{p}(t)}\right) -
  \mbox{Error}\left(\widehat{\boldsymbol{\beta}}_{\textbf{q}(t)}\right)\right)  + \frac{||\textbf{p}(t)||_0 -
    ||\textbf{q}(t)||_0}{2} \right) 
\frac{\pi_{\textbf{q}(t)}^*}{\pi_{\textbf{p}(t)}^*}.
\end{align}
Here, for the prior in Equation~\ref{SPA_PRIOR_used}:
\begin{align}
\frac{\pi_{\textbf{q}(t)}^*}{\pi_{\textbf{p}(t)}^*} &= 
\left( 1 +  \frac{||\textbf{q}(t)||_0 -
      ||\textbf{p}(t)||_0}{||\textbf{p}(t)||_0}\right)^{||\textbf{q}(t)||_0}
\left( \frac{||\textbf{p}(t)||_0}{2eM}\right) ^{||\textbf{q}(t)||_0 -
  ||\textbf{p}(t)||_0}. \label{priorRatio}
\end{align}
\item Update $\textbf{p}(t)$ by generating the following random variable:
\begin{align}
\textbf{p}(t+1) := \left\{
    \begin{array}{lrr}
      \textbf{q}(t)  & \mbox{ with probability } & r(t)\\
      \textbf{p}(t) & \mbox{ with probability } & 1 - r(t)
    \end{array}
  \right.
\end{align}
\item If $t<T$, return to step 1 and increment $t$.  Otherwise, stop.
\end{enumerate}

After running the above algorithm, \cite{RigTsy10} then approximate the sparsity pattern aggregate as:
\begin{align}
\widehat{\boldsymbol{\beta}}^{SPA} = \frac{1}{T-T_0}\sum_{t = T_0}^{T} \widehat{\boldsymbol{\beta}}_{\textbf{p}(t)}.
\end{align}
Here, $T_0$ is an arbitrary integer, used to allow for convergence of
the Markov chain. Note that the above algorithm can be applied to any prior for the class of aggregation estimators considered in this paper, we need only update Equation~\ref{priorRatio}.

In the above algorithm, $\sigma^2$ was assumed known. In general applications, $\sigma^2$ is unknown. \cite{RigTsy10} gave
the following strategy for dealing with this case. Denote
$\widehat{\boldsymbol{\beta}}^{SPA}_{\delta}$ as the sparsity pattern estimator
computed with $\sigma^2 = \delta$. Then, we estimate $\sigma^2$ as:
\begin{align}
\widehat{\sigma}^2 = \inf \left\{ \delta: \left| \frac{ || \textbf{y}
      - \textbf{X}\widehat{\boldsymbol{\beta}}^{SPA}_{\delta}||_2^2  }{n -
      M_n(\widehat{\boldsymbol{\beta}}^{SPA}_{\delta})} - \delta \right| > \alpha \right\},
\end{align}
where $\alpha >0$ is a tolerance parameter, and $M_n(\boldsymbol{\beta}) =
\sum_{j=1}^M \textbf{1}_{|\boldsymbol{\beta}_j| > 1/n}$.  Again, this strategy
needs no modification if the prior is changed. 

Note that while the sparse aggregation
estimator in up-weights sparse models via
the prior, it does not exclude any models.  The exact estimator is
therefore not sparse.  However, this computational strategy nearly
always results in a sparse estimate.  This is because the Markov chain simply does not visit any models that are not sparse. Similarly, while the structured
sparse priors we introduce do not
eliminate structured sparse models from the exact aggregate estimators, the computed estimators almost
always have this property. Alternately, we could run the Markov chain for a very long time, and obtain a model that includes all covariates. However, we would see that many covariates appear very seldom in the chain, and we could thus obtain a sparse or structured sparse solution with a simple thresholding strategy.

\subsection{Structural Modifications of the Algorithm}

In structured sparse aggregation, we can take advantage of the allowed sparsity patterns in the prior to streamline the metropolis algorithm. For grouped sparsity, we instead consider the hypercube of groups instead of the hypercube of all predictors. That is, given a set of groups $\mathcal{G}$, we instead consider patterns represented by $\{0,1\}^{|\mathcal{G}|}$. Effectively, we consider adding and removing groups as a whole, rather than individual coordinates. In the case of strong hierarchical sparsity, we can exclude any neighboring patterns that do not satisfy strong hierarchy. That is, given a DAG, we only consider adding direct descendants of the current sparsity pattern, or removing leaf nodes with respect to the current sparsity pattern.

\section{Proof and Theoretical Framework}
\label{appendix}

In the following sections, we give a general theoretical recipe, leading to the results in Section 4 in the main text. In Section~\ref{mauryLemma}, we give two Lemmas for our specific applications.

\subsection{Priors and Set Function Bounds}

\begin{lem}
\label{general31lemma1}
(From~\cite{RigTsy10}) Fix $\textbf{p} \in \{0,1\}^M$, assume that $\xi_i$ are iid random
variables such that $\mathbb{E}\xi_i = 0$, and $\mathbb{E}\xi_i^2 =
\sigma^2$, for $i=1,\ldots,n$.  Then for least squares estimator:
\begin{align}
\label{OLS}
\widehat{\boldsymbol{\beta}}_\textbf{p} &= \underset{\boldsymbol{\beta} \in \mathbb{R}^M: \ 
  \mbox{supp}(\boldsymbol{\beta}) \subseteq \mbox{supp}(\textbf{p})}{\operatorname{argmin}}\| \textbf{y} - \textbf{X}\boldsymbol{\beta}\|^2_2,
\end{align}
we have:
\begin{align}
\mathbb{E} || \textbf{X}{\widehat{\boldsymbol{\beta}}_{\textbf{p}}} - \textbf{y} ||^2_2 &\leq
\min_{\boldsymbol{\beta} \in
  \mathbb{R}^\textbf{p}} || \textbf{X}{\boldsymbol{\beta}} - \textbf{y} ||^2_2 +
\sigma^2\frac{\min(||\textbf{p}||_0,R)}{n}.
\end{align}
Where $R = \mbox{rank}(\textbf{X})$.
\end{lem}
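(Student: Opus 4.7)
The plan is to apply the classical bias--variance decomposition for ordinary least squares on a fixed support. Set $\textbf{f} := \mathbb{E}\textbf{y}$ and $\boldsymbol{\xi} := \textbf{y} - \textbf{f}$, so that $\boldsymbol{\xi}$ has mean zero and covariance $\sigma^2 I$. Since $\widehat{\boldsymbol{\beta}}_{\textbf{p}}$ is OLS restricted to the columns of $\textbf{X}$ indexed by $\mbox{supp}(\textbf{p})$, the fitted vector is linear in the data: $\textbf{X}\widehat{\boldsymbol{\beta}}_{\textbf{p}} = P_{\textbf{p}}\textbf{y}$, where $P_{\textbf{p}}$ denotes the orthogonal projection in $\mathbb{R}^n$ onto the column span of the selected columns. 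Its rank satisfies $\mbox{rank}(P_{\textbf{p}}) \leq \min(\|\textbf{p}\|_0, R)$, because it projects onto at most $\|\textbf{p}\|_0$ vectors all living inside a rank-$R$ subspace.

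With this linear form in hand, I would decompose the prediction error into a deterministic bias and a stochastic term by invoking $(I - P_{\textbf{p}})P_{\textbf{p}} = 0$:
\begin{align*}
\textbf{X}\widehat{\boldsymbol{\beta}}_{\textbf{p}} - \textbf{f} \;=\; -(I - P_{\textbf{p}})\textbf{f} + P_{\textbf{p}}\boldsymbol{\xi}.
\end{align*}
The two summands lie in orthogonal subspaces of $\mathbb{R}^n$, so the cross term in the squared-norm expansion vanishes deterministically; taking expectations and using the covariance structure of $\boldsymbol{\xi}$ yields
\begin{align*}
\mathbb{E}\|\textbf{X}\widehat{\boldsymbol{\beta}}_{\textbf{p}} - \textbf{f}\|_2^2 \;=\; \|(I - P_{\textbf{p}})\textbf{f}\|_2^2 \;+\; \sigma^2\,\mbox{tr}(P_{\textbf{p}}).
\end{align*}
The first term equals $\min_{\boldsymbol{\beta}:\,\mbox{supp}(\boldsymbol{\beta})\subseteq\mbox{supp}(\textbf{p})}\|\textbf{X}\boldsymbol{\beta} - \textbf{f}\|_2^2$ by the defining property of orthogonal projection, and $\mbox{tr}(P_{\textbf{p}}) = \mbox{rank}(P_{\textbf{p}}) \leq \min(\|\textbf{p}\|_0, R)$ by the previous paragraph.

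Assembling these pieces gives the stated inequality (with the target read as $\mathbb{E}\textbf{y}$ where the excerpt writes $\textbf{y}$; the two versions differ only by the deterministic constant $\mathbb{E}\|\boldsymbol{\xi}\|_2^2 = n\sigma^2$ which cancels between the two sides of the bound). I do not foresee any real obstacle: the whole argument is a direct calculation from the linearity of projection and the mean-zero/constant-variance structure of $\boldsymbol{\xi}$. The only substantive ingredient beyond the Pythagorean identity is the rank bound on $P_{\textbf{p}}$, and this is immediate since the submodel has at most $\|\textbf{p}\|_0$ active columns, all sitting inside a column space of dimension $R$.
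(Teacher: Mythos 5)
Your argument is correct and is the standard projection/bias--variance computation that underlies this lemma; the paper itself states the result without proof, importing it from~\cite{RigTsy10}, and your derivation is essentially the one given there (decompose $\textbf{X}\widehat{\boldsymbol{\beta}}_{\textbf{p}}-\mathbb{E}\textbf{y}$ into the orthogonal pieces $-(I-P_{\textbf{p}})\mathbb{E}\textbf{y}$ and $P_{\textbf{p}}\boldsymbol{\xi}$, then use $\mathbb{E}\|P_{\textbf{p}}\boldsymbol{\xi}\|^2=\sigma^2\operatorname{tr}(P_{\textbf{p}})\leq\sigma^2\min(\|\textbf{p}\|_0,R)$). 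The only reading conventions needed to reconcile constants are the ones you already flag or that are implicit in the source: the $\textbf{y}$ on both sides of the displayed bound denotes the regression function $\mathbb{E}\textbf{y}$, and the squared norms are empirical norms normalized by $n$, which is how $\sigma^2\operatorname{tr}(P_{\textbf{p}})$ becomes $\sigma^2\min(\|\textbf{p}\|_0,R)/n$.
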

Now, suppose that  we have a set function $\mathcal{M}: 2^\mathcal{I}
\to \mathbb{R}^+$. We then define $M_\mathcal{M}(x): \mathbb{R}^M \to
\mathbb{R}^+$ as $\mathcal{M}(\mbox{supp}(x))$. We then use a
prior of the form: 

\begin{align}
\label{generalPrior}
\pi_{\textbf{p},\mathcal{M}} := \left\{
     \begin{array}{lr}
       \frac{1}{H_\mathcal{M}} \left( \frac{M_\mathcal{M}(\textbf{p})}{2eC}
       \right)^{M_\mathcal{M}(\textbf{p})} & ||\textbf{p}||_0 \leq R\\
       \frac{1}{2} & ||\textbf{p}||_0 = M\\
       0 & \mbox{else}
     \end{array}
   \right. .
 \end{align}
Here $R$ is the rank of $\textbf{X}$ and $C\geq1$ is such that the normalizing constant $H_\mathcal{M}
\leq 4$.  Note that $4$ is an arbitrary constant used for the sake of
consistency throughout the theory presented here and in~\cite{RigTsy10}. 

\begin{lem}
\label{general31lemma2}
(From~\cite{leungInfoMixing},~\cite{RigTsy10}) 
Consider the sparsity pattern
estimator with prior $\pi_{\mathcal{M}}$: $\textbf{X}{\widehat{\boldsymbol{\beta}}^{\mathcal{M}}}$, then:
\begin{align}
\mathbb{E}|| \textbf{X}{\widehat{\boldsymbol{\beta}}^{\mathcal{M}}} - \textbf{y} ||^2_2 \leq \min_{\textbf{p}
  \in \{0,1\}^M; \pi_{\textbf{p},\mathcal{M}} \neq 0} \left\{ \mathbb{E} ||
  \textbf{X}{\widehat{\boldsymbol{\beta}}_\textbf{p}} - \textbf{y} ||^2_2 + \frac{4\sigma^2 \log(
    \pi_{\textbf{p},\mathcal{M}}^{-1})}{n} \right\}
\end{align}
\end{lem}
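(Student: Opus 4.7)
The plan is to derive this as an application of the Leung-Barron exponential-weights framework~\cite{leungInfoMixing}, which is the theoretical engine powering~\cite{RigTsy10}. Structurally, the argument splits into a Stein-style risk identity that rewrites the Gibbs aggregator as a log-partition expression, and an elementary log-sum-exp bound that collapses that log-partition to an oracle minimum over a single sparsity pattern.

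First I would record an unbiased-risk identity for each OLS projection estimator. Writing $\textbf{X}\widehat{\boldsymbol{\beta}}_{\textbf{p}} = P_{\textbf{p}}\textbf{y}$ for the orthogonal projector onto the column space restricted to $\mbox{supp}(\textbf{p})$, a direct Gaussian moment calculation gives
\begin{align*}
\mathbb{E}\|\textbf{X}\widehat{\boldsymbol{\beta}}_{\textbf{p}} - \textbf{y}\|_2^2 = \mathbb{E}\|\textbf{X}\widehat{\boldsymbol{\beta}}_{\textbf{p}} - \mathbb{E}\textbf{y}\|_2^2 + n\sigma^2 - 2\sigma^2\,\mathrm{rank}(P_{\textbf{p}}),
\end{align*}
so that $T(\textbf{p}) := \|\textbf{y} - \textbf{X}\widehat{\boldsymbol{\beta}}_{\textbf{p}}\|_2^2 + 2\sigma^2\|\textbf{p}\|_0$ is, up to the deterministic constant $n\sigma^2$, an unbiased estimator of the predictive risk. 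The weights in $\widehat{\boldsymbol{\beta}}^{\mathcal{M}}$ are exactly proportional to $\pi_{\textbf{p},\mathcal{M}}\exp(-T(\textbf{p})/(4\sigma^2))$; the extra factor $\exp(-\|\textbf{p}\|_0/2)$ in the estimator definition is precisely the Stein-type bias correction. Thus $\widehat{\boldsymbol{\beta}}^{\mathcal{M}}$ is a Gibbs posterior mean with SURE potential at inverse temperature $4\sigma^2$.

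Next I would invoke the Leung-Barron risk identity. Treating the smooth weights $w_{\textbf{p}}(\textbf{y}) = \pi_{\textbf{p},\mathcal{M}}e^{-T(\textbf{p})/(4\sigma^2)}/Z(\textbf{y})$ as functions of the Gaussian vector $\textbf{y}$, one applies Stein's integration-by-parts to the divergence of the aggregator together with the weighted-variance identity $\sum_{\textbf{p}} w_{\textbf{p}}\|\textbf{y} - \textbf{X}\widehat{\boldsymbol{\beta}}_{\textbf{p}}\|_2^2 = \|\textbf{y} - \textbf{X}\widehat{\boldsymbol{\beta}}^{\mathcal{M}}\|_2^2 + \sum_{\textbf{p}} w_{\textbf{p}}\|\textbf{X}\widehat{\boldsymbol{\beta}}_{\textbf{p}} - \textbf{X}\widehat{\boldsymbol{\beta}}^{\mathcal{M}}\|_2^2$. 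A direct calculation shows that the derivative contributions from $\nabla_{\textbf{y}} w_{\textbf{p}}$ cancel the weighted-variance term precisely when the inverse temperature equals $4\sigma^2$ (which is why the constant $4$ is sharp in the Gaussian model). Rewriting $\sum_{\textbf{p}} w_{\textbf{p}} T(\textbf{p}) = -4\sigma^2\,\mathrm{KL}(w\|\pi) - 4\sigma^2 \log Z$ and dropping the non-negative KL term produces an inequality of the form $\mathbb{E}\|\textbf{X}\widehat{\boldsymbol{\beta}}^{\mathcal{M}} - \textbf{y}\|_2^2 \leq -4\sigma^2\,\mathbb{E}[\log Z(\textbf{y})]$ plus the rank-adjustment constants inherited from step one.

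Finally I would apply the pointwise log-sum-exp bound $-\log\sum_{\textbf{q}} a_{\textbf{q}} e^{-x_{\textbf{q}}} \leq x_{\textbf{p}} - \log a_{\textbf{p}}$, valid for any single $\textbf{p}$ because one term alone lower-bounds the sum. With $a = \pi_{\cdot,\mathcal{M}}$ and $x = T(\cdot)/(4\sigma^2)$ this gives $-4\sigma^2\log Z(\textbf{y}) \leq T(\textbf{p}) + 4\sigma^2\log(\pi_{\textbf{p},\mathcal{M}}^{-1})$ pointwise in $\textbf{y}$; taking expectations, substituting the unbiased-risk identity of step one to rewrite $\mathbb{E}T(\textbf{p})$ in terms of $\mathbb{E}\|\textbf{X}\widehat{\boldsymbol{\beta}}_{\textbf{p}} - \textbf{y}\|_2^2$, and minimizing over admissible $\textbf{p}$ yields the stated oracle inequality. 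The genuinely hard step is the Stein cancellation in the third paragraph; that identity is the contribution of~\cite{leungInfoMixing} and is imported wholesale by~\cite{RigTsy10}, while the surrounding unbiased-risk accounting and the log-sum-exp bound are essentially mechanical and do not depend on any structural properties of $\mathcal{M}$ --- which is exactly why the same lemma will drive both Proposition~\ref{prop1pen}--\ref{prop2pen} and Proposition~\ref{prop1group}--\ref{prop2group}.
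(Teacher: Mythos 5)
The paper offers no proof of this lemma at all: it is imported verbatim from Leung--Barron and Rigollet--Tsybakov, so there is no in-paper argument to compare yours against. Your reconstruction is the correct one and is essentially the proof in the cited sources: the weights are exponentiated Stein unbiased risk estimates at temperature $4\sigma^2$ (the $\exp(-\|\textbf{p}\|_0/2)$ factor being the degrees-of-freedom correction), Stein integration by parts shows that at exactly this temperature the divergence contribution of the weight gradients offsets the weighted-variance term so that the aggregate's unbiased risk estimate equals $\sum_{\textbf{p}} w_{\textbf{p}} T(\textbf{p})$, and the Kullback--Leibler rewriting plus the single-term lower bound on the partition function yields the oracle minimum with price $4\sigma^2\log(\pi_{\textbf{p},\mathcal{M}}^{-1})$. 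You also correctly identify that nothing in this argument uses any property of $\mathcal{M}$ beyond $\pi_{\textbf{p},\mathcal{M}}>0$, which is precisely why the paper can swap in its structured priors for free. Two small bookkeeping caveats, both inherited from the paper's notation rather than gaps in your reasoning: (i) $T(\textbf{p})$ with $2\sigma^2\|\textbf{p}\|_0$ is the exact SURE only when the selected columns are linearly independent (otherwise $\mathrm{rank}(P_{\textbf{p}})<\|\textbf{p}\|_0$ and the weights use an over-estimate of the risk, which still yields a valid bound); and (ii) the lemma is displayed with $\mathbb{E}\|\cdot-\textbf{y}\|_2^2$ on both sides, whereas the Leung--Barron identity is naturally a statement about the risk $\mathbb{E}\|\cdot-\mathbb{E}\textbf{y}\|_2^2$, and the additive constants converting between the two differ between the aggregate (whose offset involves its divergence) and the individual projections (whose offset involves $\mathrm{rank}(P_{\textbf{p}})$); the downstream propositions make clear the paper is really treating $\textbf{y}$ in these displays as the noiseless regression function, and your proof proves that version cleanly.
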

We now make the
following assumption:
\begin{asm}
\label{generalAssumptionPi}
For all $\textbf{p} \in \mathcal{P}$ where $R>||\textbf{p}||_0>0$:
\begin{align}
\frac{||\textbf{p}||_0}{M_\mathcal{M}(\textbf{p})} \leq
\log\left( 1 + \frac{eC}{\max(M_\mathcal{M}(\textbf{p}),1)}\right).
\end{align}
\end{asm}
Now, for $\textbf{p}$ such that $||\textbf{p}||_0 < R$, the following holds:
\begin{lem}
\label{general31lemma3}
\begin{align}
\frac{4\sigma^2 \log( \pi_{\textbf{p},\mathcal{M}}^{-1})}{n} \leq \frac{8\sigma^2
  M_\mathcal{M}(\textbf{p})}{n} \log \left(1 +
  \frac{eM}{\max(M_\mathcal{M}(\textbf{p}),1) } \right) + \frac{8
  \sigma^2}{n} \log2.
\end{align}
\end{lem}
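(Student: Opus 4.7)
The plan is to unwind the definition of $\pi_{\textbf{p},\mathcal{M}}$ and reduce the claim to a one-variable elementary inequality. Since the hypothesis $\|\textbf{p}\|_0 < R$ places us in the first branch of Equation~\ref{generalPrior}, I would begin by writing
\begin{align*}
\log \pi_{\textbf{p},\mathcal{M}}^{-1} \;=\; \log H_\mathcal{M} \;+\; M_\mathcal{M}(\textbf{p})\,\log\!\left(\frac{2eC}{M_\mathcal{M}(\textbf{p})}\right),
\end{align*}
with the convention $0\log 0 = 0$ so the case $M_\mathcal{M}(\textbf{p})=0$ is covered. Multiplying by $4\sigma^2/n$ and bounding $\log H_\mathcal{M} \leq \log 4 = 2\log 2$ using the standing assumption $H_\mathcal{M}\leq 4$ immediately produces the additive $\tfrac{8\sigma^2}{n}\log 2$ term on the right-hand side of the claim.

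The remaining task is to absorb the $M_\mathcal{M}(\textbf{p})$-dependent piece into $\tfrac{8\sigma^2 M_\mathcal{M}(\textbf{p})}{n}\log\bigl(1 + eC/\max(M_\mathcal{M}(\textbf{p}),1)\bigr)$ (I read the $eM$ in the stated right-hand side as $eC$, consistent with the $2eC$ appearing in the prior and with Propositions~\ref{prop1pen} and~\ref{prop1group}, where $C$ specializes to $M$ or $|\mathcal{G}|$, respectively). The case $M_\mathcal{M}(\textbf{p}) = 0$ is immediate since both sides of the residual inequality vanish. For $m := M_\mathcal{M}(\textbf{p}) \geq 1$, dividing through by $4\sigma^2 m/n$ reduces the problem to the scalar inequality $\log(2eC/m) \leq 2\log(1 + eC/m)$. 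Setting $x = eC/m > 0$ and exponentiating, this is just $2x \leq (1+x)^2$, i.e.\ $0 \leq 1 + x^2$, which is obviously true.

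The main obstacle is really only bookkeeping: one must treat the degenerate case $M_\mathcal{M}(\textbf{p})=0$ separately because $\log(2eC/M_\mathcal{M}(\textbf{p}))$ is undefined there, which is precisely what the $\max(\cdot,1)$ in the stated bound accommodates. Note that Assumption~\ref{generalAssumptionPi} is not used in this lemma itself; it enters only later, when this bound is combined with Lemma~\ref{general31lemma2} and the $\|\textbf{p}\|_0$ term from Lemma~\ref{general31lemma1} in order to produce the sparsity-versus-structured-sparsity trade-off driving the oracle inequalities of Section~\ref{Theory}.
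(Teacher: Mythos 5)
Your proof is correct, and it is the standard argument: the paper actually states this lemma without proof (it is inherited from the corresponding computation in \cite{RigTsy10} for the special case $\mathcal{M} = \|\cdot\|_0$, $C=M$), so there is nothing to diverge from. Your decomposition $\log \pi_{\textbf{p},\mathcal{M}}^{-1} = \log H_\mathcal{M} + M_\mathcal{M}(\textbf{p})\log\bigl(2eC/M_\mathcal{M}(\textbf{p})\bigr)$, the bound $\log H_\mathcal{M} \leq 2\log 2$, and the reduction to $2x \leq (1+x)^2$ are exactly the intended steps; your reading of the $eM$ in the display as $eC$ (the two coincide in Proposition~\ref{prop1pen} and are otherwise a typo, as Proposition~\ref{general31} confirms) and your observation that Assumption~\ref{generalAssumptionPi} enters only when this bound is combined with Lemmas~\ref{general31lemma1} and~\ref{general31lemma2} are both accurate. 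The only caveat, immaterial here, is that your case split $M_\mathcal{M}(\textbf{p}) \in \{0\} \cup [1,\infty)$ tacitly excludes $0 < M_\mathcal{M}(\textbf{p}) < 1$, where the inequality can fail; this never occurs for the set functions $\|\cdot\|_{0,c}$ and $\|\cdot\|_{0,\mathcal{G}}$ actually used in the paper.
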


We now present the main general result:

\begin{prp} 
\label{general31}
For any $M\geq1, n \geq1$, the sparsity pattern
estimator with prior $\pi_{\mathcal{M}}$: $\textbf{X}{\widehat{\boldsymbol{\beta}}^{\mathcal{M}}}$ satisfies:
\begin{align}
\mathbb{E} ||\textbf{X}{\widehat{\boldsymbol{\beta}}^{\mathcal{M}}} - \textbf{y} ||^2_2 \leq 
\min_{\boldsymbol{\beta} \in \mathbb{R}^M} \left\{ ||\textbf{X}{\boldsymbol{\beta}} - \textbf{y}||^2_2 + \min
  \left\{\frac{\sigma^2 R}{n}, \mbox{ } 9 \sigma^2
  \frac{M_\mathcal{M}(\boldsymbol{\beta})}{n} \log \left( 1 + 
    \frac{eC}{\max(M_\mathcal{M}(\boldsymbol{\beta}), 1)}\right) \right\} \right\} +
\frac{8\sigma^2}{n}\log 2.
\end{align}
\end{prp}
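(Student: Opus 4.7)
The plan is to combine the three preceding lemmas by choosing, for each candidate $\boldsymbol{\beta}$, two sparsity patterns $\textbf{p}$ inside Lemma~\ref{general31lemma2} and then taking the smaller of the resulting bounds.

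First I would fix $\boldsymbol{\beta} \in \mathbb{R}^M$ with $\|\mbox{supp}(\boldsymbol{\beta})\|_0 \leq R$ and apply Lemma~\ref{general31lemma2} with $\textbf{p} = \mbox{supp}(\boldsymbol{\beta})$, so that $\pi_{\textbf{p},\mathcal{M}}$ lies in the first branch of the prior. Lemma~\ref{general31lemma1} bounds $\mathbb{E}\|\textbf{X}\widehat{\boldsymbol{\beta}}_{\textbf{p}} - \textbf{y}\|_2^2$ by $\|\textbf{X}\boldsymbol{\beta} - \textbf{y}\|_2^2 + \sigma^2\|\textbf{p}\|_0/n$, since $\boldsymbol{\beta}$ itself is admissible in the restricted minimum. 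Lemma~\ref{general31lemma3} bounds the log-prior penalty by $8\sigma^2 M_\mathcal{M}(\boldsymbol{\beta})/n \cdot \log(1+eC/\max(M_\mathcal{M}(\boldsymbol{\beta}),1)) + 8\sigma^2\log 2/n$. The crucial algebraic step is to use Assumption~\ref{generalAssumptionPi} to absorb the stray $\sigma^2\|\textbf{p}\|_0/n$ into the log term: the assumption gives $\|\textbf{p}\|_0 \leq M_\mathcal{M}(\textbf{p}) \log(1 + eC/\max(M_\mathcal{M}(\textbf{p}),1))$, turning the coefficient $8$ on the log into $9$ and producing a bound of the form $\|\textbf{X}\boldsymbol{\beta}-\textbf{y}\|_2^2 + 9\sigma^2 M_\mathcal{M}(\boldsymbol{\beta})/n \cdot \log(\cdot) + 8\sigma^2\log 2/n$.

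Second, to obtain the $\sigma^2 R/n$ alternative inside the inner min, I would apply Lemma~\ref{general31lemma2} with $\textbf{p} = (1,\ldots,1)$, for which $\pi_{\textbf{p},\mathcal{M}} = 1/2$. Lemma~\ref{general31lemma1} yields $\mathbb{E}\|\textbf{X}\widehat{\boldsymbol{\beta}}_{\textbf{p}} - \textbf{y}\|_2^2 \leq \min_{\boldsymbol{\beta}\in\mathbb{R}^M}\|\textbf{X}\boldsymbol{\beta}-\textbf{y}\|_2^2 + \sigma^2 R/n$, and the prior term becomes $4\sigma^2\log 2/n \leq 8\sigma^2\log 2/n$. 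This gives a second, $\boldsymbol{\beta}$-independent, bound of the form $\min_{\boldsymbol{\beta}}\|\textbf{X}\boldsymbol{\beta}-\textbf{y}\|_2^2 + \sigma^2 R/n + 8\sigma^2\log 2/n$. Since Lemma~\ref{general31lemma2} is a minimum over $\textbf{p}$, both upper bounds hold simultaneously, so for each $\boldsymbol{\beta}$ with $\|\mbox{supp}(\boldsymbol{\beta})\|_0 \leq R$ I may replace the extra term by the smaller of $\sigma^2 R/n$ and $9\sigma^2 M_\mathcal{M}(\boldsymbol{\beta})/n \cdot \log(1+eC/\max(M_\mathcal{M}(\boldsymbol{\beta}),1))$, which is exactly the inner min in the proposition.

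Finally, to extend the outer minimum to arbitrary $\boldsymbol{\beta}\in\mathbb{R}^M$ I would invoke the standard rank-reduction remark: because $\textbf{X}$ has rank $R$, any $\boldsymbol{\beta}$ can be replaced by some $\tilde\boldsymbol{\beta}$ with $\|\mbox{supp}(\tilde\boldsymbol{\beta})\|_0 \leq R$ and $\textbf{X}\tilde\boldsymbol{\beta} = \textbf{X}\boldsymbol{\beta}$, so restricting the outer minimum to sparse $\boldsymbol{\beta}$ is without loss of generality for the $\|\textbf{X}\boldsymbol{\beta}-\textbf{y}\|_2^2$ term, and the $\sigma^2 R/n$ alternative is always safely available for any $\boldsymbol{\beta}$ via the full-support $\textbf{p}$. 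I expect the main obstacle to be the constant bookkeeping, especially verifying that Assumption~\ref{generalAssumptionPi} gives exactly the coefficient $9$ after absorbing $\|\textbf{p}\|_0$, and tracking $\log 2$ factors so that the single residual $8\sigma^2\log 2/n$ covers both regimes; once Lemmas~\ref{general31lemma1}--\ref{general31lemma3} are in hand, the remaining work is routine.
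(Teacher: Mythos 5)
Your proposal is correct and follows essentially the same route as the paper's proof: apply Lemma~\ref{general31lemma2} with $\textbf{p}=\mbox{supp}(\boldsymbol{\beta})$, control the least-squares risk via Lemma~\ref{general31lemma1} and the log-prior via Lemma~\ref{general31lemma3}, use Assumption~\ref{generalAssumptionPi} to absorb the $\sigma^2\|\textbf{p}\|_0/n$ term into the logarithmic term (turning the constant $8$ into $9$), and separately invoke the full-support pattern with prior weight $1/2$ to obtain the $\sigma^2 R/n + 4\sigma^2\log2/n$ alternative before taking the minimum. The paper's proof states these same two cases more tersely; your version only adds explicit bookkeeping (and the rank-reduction remark for the outer minimum) that the paper leaves implicit.
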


\begin{proof}
For $||\boldsymbol{\beta}||_0 \leq R$, we know from combining Lemma~\ref{general31lemma2},
Lemma~\ref{general31lemma3}, and Assumption~\ref{generalAssumptionPi} that
$\textbf{X}{\widehat{\boldsymbol{\beta}}^{\mathcal{M}}}$ satisfies: 
\begin{align}
\mathbb{E} ||\textbf{X}{\widehat{\boldsymbol{\beta}}^{\mathcal{M}}} - \textbf{y} ||^2_2 \leq 
\min_{\boldsymbol{\beta} \in \mathbb{R}^M, ||\boldsymbol{\beta}||_0 < R} \left\{ ||
  \textbf{X}{\boldsymbol{\beta}} - \textbf{y} ||^2_2 + \frac{9\sigma^2
  M_\mathcal{M}(\boldsymbol{\beta})}{n} \log \left( 1+
  \frac{eC}{\max(M_\mathcal{M}(\boldsymbol{\beta}),1) } \right) \right\} + \frac{8
  \sigma^2}{n} \log2 .
\end{align}
For $||\boldsymbol{\beta}||_0 = M$, we have:
\begin{align}
\mathbb{E} ||\textbf{X}{\widehat{\boldsymbol{\beta}}^{\mathcal{M}}} - \textbf{y} ||^2_2 \leq \min_{\boldsymbol{\beta} \in \mathbb{R}^M} \left\{  ||
  \textbf{X}{\boldsymbol{\beta}} - \textbf{y} ||^2_2 + \sigma^2\frac{R}{n} \right\} + \frac{4
  \sigma^2}{n} \log2 .
\end{align}
And so the proposition follows directly.
\end{proof}

\subsection{Convex Norm Bounds}

\begin{lem}
\label{lemma82genHelp}
For integer $M>0$, define $\mathcal{I} = \{1,\ldots,M\}$. Suppose that
we have a set function $\mathcal{M}: 2^\mathcal{I} \to \mathbb{R}^+$ and
norm $||\cdot||_\mathcal{M}: \mathbb{R}^M \to \mathbb{R}$. We then
define $M_\mathcal{M}(x): \mathbb{R}^M \to \mathbb{R}^+$ as
$\mathcal{M}(\mbox{supp}(x))$.  Then, if for any $\boldsymbol{\beta}^* \in
\mathbb{R}^M \backslash \{0\}$, any integer $k\geq 1$, and any
function $f$ we have:
\begin{align}
\min_{\boldsymbol{\beta}: ||\boldsymbol{\beta}||_\mathcal{M} = ||\boldsymbol{\beta}^*||_\mathcal{M}; M_\mathcal{M}(\boldsymbol{\beta}) \leq k} ||f -
\textbf{X}{\boldsymbol{\beta}}||^2 \leq ||f - \textbf{X}{\boldsymbol{\beta}^*}||^2 + \frac{||\boldsymbol{\beta}^*||_\mathcal{M}^2}{\min(k,M_\mathcal{M}(\boldsymbol{\beta}^*))}.
\end{align}

Then, for any $C\geq 1$, integer $n>0$, constant $\nu>0$, a given real
number $k^*\geq1$, $\boldsymbol{\beta}^* \in \mathbb{R}^M \backslash \{0\}$, and
any function $\overline{M}_\mathcal{M}(x)$ that satisfies, for some $\gamma > 0$; $M_\mathcal{M}(x) \leq \overline{M}_\mathcal{M}(x) \leq \gamma M_\mathcal{M}(x) \ \forall
x \in \mathbb{R}^M$:
\begin{align}
\min_{\boldsymbol{\beta} \in \mathbb{R}^M} \left\{ ||\textbf{X}{\boldsymbol{\beta}} - \textbf{y}||^2_2 +
  \nu^2\frac{\overline{M}_\mathcal{M}(\boldsymbol{\beta})}{n}\log\left( 1 +
    \frac{eC}{\max(\overline{M}_\mathcal{M}(\boldsymbol{\beta}),1)}\right) \right\} \leq \notag \\
    \leq ||\textbf{X}{\boldsymbol{\beta}^*}
- \textbf{y}||^2_2 + \nu^2\frac{k^*}{n}\log\left( 1 + \frac{eC}{k^*} \right) +
\frac{\gamma ||\boldsymbol{\beta}^*||_\mathcal{M}^2}{k^*}.
\end{align}
\end{lem}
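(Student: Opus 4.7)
The plan is to exhibit a concrete $\boldsymbol{\beta}$ inside the minimization on the left-hand side and invoke the hypothesis to bound the resulting expression. The central analytic fact I will use is that $F(x) := x \log(1 + eC/\max(x,1))$ is nondecreasing and concave on $(0,\infty)$: writing $u = eC/x$, on $x \geq 1$ one checks that $F'(x) = \log(1+u) - u/(1+u) \geq 0$ and $F''(x) = -u\, eC / (x^2 (1+u)^2) \leq 0$, while on $(0,1]$ the function is linear. These monotonicity and concavity properties let me transfer a bound on $\overline{M}_\mathcal{M}(\boldsymbol{\beta})$ into a bound on the log term.

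I would split on the size of $M_\mathcal{M}(\boldsymbol{\beta}^*)$ relative to $k^*/\gamma$. In the easy case $\gamma M_\mathcal{M}(\boldsymbol{\beta}^*) \leq k^*$, plugging $\boldsymbol{\beta} = \boldsymbol{\beta}^*$ directly into the minimization gives $\overline{M}_\mathcal{M}(\boldsymbol{\beta}^*) \leq \gamma M_\mathcal{M}(\boldsymbol{\beta}^*) \leq k^*$, so monotonicity of $F$ upper bounds the log term by $\nu^2 k^*/n \cdot \log(1 + eC/k^*)$; the slack $\gamma \|\boldsymbol{\beta}^*\|_\mathcal{M}^2/k^*$ on the right-hand side is nonnegative, so the inequality holds.

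In the remaining case $\gamma M_\mathcal{M}(\boldsymbol{\beta}^*) > k^*$, I choose the integer $k = \lceil k^*/\gamma \rceil \geq 1$ and apply the hypothesis with $f := \textbf{y}$, producing $\boldsymbol{\beta}$ with $\|\boldsymbol{\beta}\|_\mathcal{M} = \|\boldsymbol{\beta}^*\|_\mathcal{M}$, $M_\mathcal{M}(\boldsymbol{\beta}) \leq k$, and $\|\textbf{y}-\textbf{X}\boldsymbol{\beta}\|_2^2 \leq \|\textbf{y}-\textbf{X}\boldsymbol{\beta}^*\|_2^2 + \|\boldsymbol{\beta}^*\|_\mathcal{M}^2/\min(k, M_\mathcal{M}(\boldsymbol{\beta}^*))$. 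Because $k \geq k^*/\gamma$ and, by the case assumption, $M_\mathcal{M}(\boldsymbol{\beta}^*) \geq k^*/\gamma$, the denominator is at least $k^*/\gamma$, and the slack collapses to at most $\gamma \|\boldsymbol{\beta}^*\|_\mathcal{M}^2/k^*$ as required. For the log term, $\overline{M}_\mathcal{M}(\boldsymbol{\beta}) \leq \gamma k \leq k^* + \gamma$, and the concave-tangent inequality $F(k^* + \gamma) \leq F(k^*) + \gamma F'(k^*) \leq F(k^*) + \gamma \log(1 + eC/k^*)$ delivers $\nu^2 k^*/n \cdot \log(1 + eC/k^*)$ up to a lower-order additive correction.

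The main obstacle is the mismatch between the integer-valued $k$ in the hypothesis and the real-valued parameter $k^*$ in the conclusion: $k = \lceil k^*/\gamma \rceil$ tightens the slack but overshoots the log term by at most an additive $\gamma \log(1 + eC/k^*)$, whereas $k = \lfloor k^*/\gamma \rfloor$ tightens the log term but loosens the slack by a factor of two when $k^*/\gamma \geq 2$. Either discrepancy is lower order and can be absorbed into the absolute constants (e.g., the $11$ in the $\|\boldsymbol{\beta}\|_1$-type term) appearing in the downstream Propositions~\ref{prop2pen} and~\ref{prop2group}; when $k^*/\gamma$ happens to be an integer, the two bounds coincide and the lemma holds as stated with no slack.
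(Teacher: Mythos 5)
Your overall strategy is the same as the paper's: split on whether $\boldsymbol{\beta}^*$ is already sparse enough relative to $k^*$, handle the easy case by substituting $\boldsymbol{\beta}=\boldsymbol{\beta}^*$ into the left-hand minimum and invoking monotonicity of $t\mapsto t\log(1+eC/\max(t,1))$, and in the hard case apply the approximation hypothesis at scale roughly $k^*/\gamma$ so that $\|\boldsymbol{\beta}^*\|_\mathcal{M}^2/\min(k,M_\mathcal{M}(\boldsymbol{\beta}^*))$ collapses to $\gamma\|\boldsymbol{\beta}^*\|_\mathcal{M}^2/k^*$. The gap is in the hard case as you execute it: with $k=\lceil k^*/\gamma\rceil$ you only control $\overline{M}_\mathcal{M}(\boldsymbol{\beta})\leq \gamma k\leq k^*+\gamma$, so the logarithmic term you actually bound is $\frac{\nu^2}{n}(k^*+\gamma)\log(1+eC/(k^*+\gamma))$, and your own concavity estimate shows the excess over the claimed $\frac{\nu^2 k^*}{n}\log(1+eC/k^*)$ can be as large as $\frac{\nu^2\gamma}{n}\log(1+eC/k^*)$; the alternative $k=\lfloor k^*/\gamma\rfloor$ loses a factor of $2$ in the slack and is unavailable when $k^*<\gamma$. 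The lemma's conclusion contains no such additive correction or inflated constant, and it is consumed verbatim by Proposition~\ref{general82} (with a specific $k^*$ chosen as a function of $\|\boldsymbol{\beta}^*\|_\mathcal{M}$), so "absorb it into the $11$ downstream" is a promissory note, not a proof: you have established a strictly weaker statement than the one asserted.

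The paper's proof sidesteps the overshoot by different bookkeeping: in the case $k^*\leq\overline{M}_\mathcal{M}(\boldsymbol{\beta}^*)$ it minimizes over $1\leq k\leq\overline{M}_\mathcal{M}(\boldsymbol{\beta}^*)$ and, for each $k$, restricts the inner minimum to $\{M_\mathcal{M}(\boldsymbol{\beta})\leq k/\gamma\}$, on which $\overline{M}_\mathcal{M}(\boldsymbol{\beta})\leq\gamma M_\mathcal{M}(\boldsymbol{\beta})\leq k$ holds exactly; monotonicity then yields the log term $\frac{\nu^2 k}{n}\log(1+eC/k)$ with no correction, the hypothesis applied at level $k/\gamma$ (together with $k/\gamma\leq M_\mathcal{M}(\boldsymbol{\beta}^*)$) yields slack exactly $\gamma\|\boldsymbol{\beta}^*\|_\mathcal{M}^2/k$, and evaluating at $k=k^*$ finishes. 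To your credit, you have isolated a real imprecision that the paper passes over silently: the hypothesis is stated only for integer $k$, yet the paper invokes it at the generally non-integer level $k/\gamma$ (and $k^*$ itself is real), and the constructions in Lemmas~\ref{lemma81penalizedL0} and~\ref{lemma81groupedL0} genuinely use an integer number of multinomial draws. But the right repair is to extend the hypothesis to real arguments $k\geq1$ (or to restate the lemma with an explicit universal constant), not to leave an unquantified error term to be cleaned up in results that cite the lemma in its exact form.
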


\begin{proof}
We consider two cases, $k^* \leq \overline{M}_\mathcal{M}(\boldsymbol{\beta}^*)$, and $k^* >
\overline{M}_\mathcal{M}(\boldsymbol{\beta}^*)$

\begin{itemize}
\item Let $k^* \leq \overline{M}_\mathcal{M}(\boldsymbol{\beta}^*)$.
\begin{align}
 & \min_{\boldsymbol{\beta} \in \mathbb{R}^M} \left\{ ||\textbf{X}{\boldsymbol{\beta}} - \textbf{y}||^2_2 +
  \nu^2\frac{\overline{M}_\mathcal{M}(\boldsymbol{\beta})}{n}\log\left( 1 +
    \frac{eC}{\max(\overline{M}_\mathcal{M}(\boldsymbol{\beta}),1)}\right) \right\} \leq \notag \\
    \leq & \min_{1\leq k \leq \overline{M}_\mathcal{M}(\boldsymbol{\beta}^*)}
\min_{\boldsymbol{\beta} \in \mathbb{R}^M:\overline{M}_\mathcal{M}(\boldsymbol{\beta}) \leq k} \left\{ ||\textbf{X}{\boldsymbol{\beta}} - \textbf{y}||^2_2 +
  \nu^2\frac{\overline{M}_\mathcal{M}(\boldsymbol{\beta})}{n}\log\left( 1 +
    \frac{eC}{\max(\overline{M}_\mathcal{M}(\boldsymbol{\beta}),1)}\right) \right\}   \\
\leq & \min_{1\leq k \leq \overline{M}_\mathcal{M}(\boldsymbol{\beta}^*)}
\min_{\boldsymbol{\beta} \in \mathbb{R}^M:M_\mathcal{M}(\boldsymbol{\beta}) \leq k/\gamma} \left\{ ||\textbf{X}{\boldsymbol{\beta}} - \textbf{y}||^2_2 +
  \nu^2\frac{\gamma M_\mathcal{M}(\boldsymbol{\beta})}{n}\log\left( 1 +
    \frac{eC}{\max(\gamma M_\mathcal{M}(\boldsymbol{\beta}),1)}\right) \right\}\\
\leq & \min_{1\leq k \leq \overline{M}_\mathcal{M}(\boldsymbol{\beta}^*)} \left\{
\min_{\boldsymbol{\beta} \in \mathbb{R}^M: ||\boldsymbol{\beta}||_\mathcal{M} = ||\boldsymbol{\beta}^*||_\mathcal{M};  M_\mathcal{M}(\boldsymbol{\beta}) \leq k/\gamma} \left\{ ||\textbf{X}{\boldsymbol{\beta}} - \textbf{y}||^2_2 \right\} +
  \nu^2\frac{k}{n}\log\left( 1 +
    \frac{eC}{k}\right)  \right\}\\
\leq & \ ||\textbf{X}{\boldsymbol{\beta}^*} - \textbf{y} ||^2_2 + \min_{1\leq k \leq
  \overline{M}_\mathcal{M}(\boldsymbol{\beta}^*)} \left\{ \frac{\gamma ||\boldsymbol{\beta}^*||_\mathcal{M}^2}{k}+
  \nu^2\frac{k}{n}\log\left( 1 +
    \frac{eC}{k}\right) \right\}\\
\leq & \ ||\textbf{X}{\boldsymbol{\beta}^*}
- \textbf{y}||^2_2 + \nu^2\frac{k^*}{n}\log\left( 1 + \frac{eC}{k^*} \right) +
\frac{\gamma||\boldsymbol{\beta}^*||_\mathcal{M}^2}{k^*} 
\end{align}


In the above, we use the monotonicity of the mapping $g(t) =
\frac{t}{n}\log\left( 1 + \frac{eC}{t}\right)$ for $t\geq1$. We apply the assumptions on $M_\mathcal{M}$ in the next steps.  We finally use the fact that $k^* \leq
\overline{M}_\mathcal{M}(\boldsymbol{\beta}^*)$ in the fourth step.  This
completes this case.  

\item For $k^* > \overline{M}_\mathcal{M}(\boldsymbol{\beta}^*)$, we can use a simple argument:
\begin{align}
& \ \min_{\boldsymbol{\beta} \in \mathbb{R}^M} \left\{ ||\textbf{X}{\boldsymbol{\beta}} - \textbf{y}||^2_2 +
  \nu^2\frac{\overline{M}_\mathcal{M}(\boldsymbol{\beta})}{n}\log\left( 1 +
    \frac{eC}{\max(\overline{M}_\mathcal{M}(\boldsymbol{\beta}),1)}\right) \right\} \leq \notag \\
\leq & \ ||\textbf{X}{\boldsymbol{\beta}^*} - \textbf{y}||^2_2 +
  \nu^2\frac{\overline{M}_\mathcal{M}(\boldsymbol{\beta}^*)}{n}\log\left( 1 +
    \frac{eC}{\max(\overline{M}_\mathcal{M}(\boldsymbol{\beta}^*),1)}\right) \\
\leq & \ ||\textbf{X}{\boldsymbol{\beta}^*} - \textbf{y}||^2_2 +
  \nu^2\frac{k^*}{n}\log\left( 1 +
    \frac{eC}{k^*}\right) +
\frac{\gamma ||\boldsymbol{\beta}^*||_\mathcal{M}^2}{k^*}
\end{align}
\end{itemize}
These two cases complete the proof.
\end{proof}

Note that Lemmas~\ref{lemma81penalizedL0} and~\ref{lemma81groupedL0}
give results that guarantee that the conditions of the above lemma are
satisfied in the two important cases considered in this paper. We may
also use $M_\mathcal{M}(\cdot)$ with $\gamma = 1$ in place of $\overline{M}_\mathcal{M}$
at all points in this lemma and obtain the same result in terms of
$M_\mathcal{M}(\cdot)$.  In light of this result, we now give a
generalized version of Lemma 8.2 from~\cite{RigTsy10}.  The result of the
lemma has been simplified from the version in~\cite{RigTsy10}, but the full
result still holds. 

\begin{prp}
\label{general82}
Assume all of the conditions of Lemma~\ref{lemma82genHelp}.  Then,
\begin{align}
 \mathbb{E}||\textbf{X}{\widehat{\boldsymbol{\beta}}^{\mathcal{M}}} - \textbf{y} ||^2_2 \leq \min_{\boldsymbol{\beta} \in
   \mathbb{R}^M} \{ ||\textbf{X}{\boldsymbol{\beta}} - \textbf{y}||^2_2 + \phi_{n,M}(\boldsymbol{\beta}) \} +
 \frac{\sigma^2}{n}(9\log(1+eM) + 8\log2)
\end{align}
where $\phi_{n,M,C,\mathcal{M}}(0) := 0$ and for $\boldsymbol{\beta} \neq 0$:
\begin{align}
\phi_{n,M,C,\mathcal{M}} = \min\left[ \frac{\sigma^2}{n},\frac{9\sigma^2
    \overline{M}_\mathcal{M}(\boldsymbol{\beta})}{n} \log \left( 1+
    \frac{eC}{\max(\overline{M}_\mathcal{M}(\boldsymbol{\beta}),1)}\right), \frac{11\sigma
    \sqrt{\gamma}||\boldsymbol{\beta}||_\mathcal{M}}{\sqrt{n}}\sqrt{\log \left( 1 +
      \frac{3eC\sigma}{||\boldsymbol{\beta}||_\mathcal{M}\sqrt{\gamma n}}\right)}  \ \right]
\end{align}

\begin{proof}
We first show the following:
\begin{align}
\min_{\boldsymbol{\beta} \in \mathbb{R}^M} \left\{ ||\textbf{X}{\boldsymbol{\beta}} - \textbf{y}||^2_2 +
  \nu^2\frac{\overline{M}_\mathcal{M}(\boldsymbol{\beta})}{n}\log\left( 1 +
    \frac{eC}{\max(\overline{M}_\mathcal{M}(\boldsymbol{\beta}),1)}\right)
\right\} \leq  \notag \\
\leq \min_{\boldsymbol{\beta} \in \mathbb{R}^M} \left\{ ||\textbf{X}{\boldsymbol{\beta}} - \textbf{y} ||_2^2 + (3 +
1/e) \overline{\phi}_{n,M,C,\mathcal{M}}(\boldsymbol{\beta}) \right\}.\label{subGen82}
\end{align}
Where $\overline{\phi}_{n,M,C,\mathcal{M}}(\boldsymbol{\beta}) = 0$ for $\boldsymbol{\beta}
=0$ and otherwise:
\begin{align}
\overline{\phi}_{n,M,C,\mathcal{M}}(\boldsymbol{\beta}) =  \frac{\nu\sqrt{\gamma}||\boldsymbol{\beta}||_\mathcal{M}}{\sqrt{n}} \sqrt{\log
    \left( 1 + \frac{eC\nu}{||\boldsymbol{\beta}||_\mathcal{M} \sqrt{\gamma n}}\right)}
  + \frac{\nu^2\log(1 + eC)}{(3 + 1/e)n}.
\end{align}

It is clear that Equation~\ref{subGen82} holds for $\boldsymbol{\beta} = 0$.  For $\boldsymbol{\beta} \neq 0$, we
begin with the statement of Lemma~\ref{lemma82genHelp}: 
\begin{align}
\min_{\boldsymbol{\beta} \in \mathbb{R}^M} \left\{ ||\textbf{X}{\boldsymbol{\beta}} - \textbf{y}||^2_2 +
  \nu^2\frac{\overline{M}_\mathcal{M}(\boldsymbol{\beta})}{n}\log\left( 1 +
    \frac{eC}{\max(\overline{M}_\mathcal{M}(\boldsymbol{\beta}),1)}\right) \right\} \leq \\
    \leq ||\textbf{X}{\boldsymbol{\beta}^*}
- \textbf{y}||^2_2 + \nu^2\frac{k^*}{n}\log\left( 1 + \frac{eC}{k^*} \right) +
\frac{\gamma ||\boldsymbol{\beta}^*||_\mathcal{M}^2}{k^*}
\end{align}
Then, using the proof of Lemma 8.2 in~\cite{RigTsy10}, we can show:
\begin{align}
\nu^2\frac{k^*}{n}\log\left( 1 + \frac{eC}{k^*} \right) +
\frac{\gamma||\boldsymbol{\beta}^*||_\mathcal{M}^2}{k^*} \leq (3 + 1/e) \overline{\phi}_{n,M,C,\mathcal{M}}(\boldsymbol{\beta}^*).
\end{align}
We next let $\sigma = \nu$ and combine Equation~\ref{subGen82} with Lemma~\ref{general31} to complete the proof. The constants are finally rounded up to the nearest integer for clarity as in~\cite{RigTsy10}.
\end{proof}
\end{prp}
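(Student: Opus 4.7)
The plan is to piggy-back on Proposition~\ref{general31} for two of the three terms in the minimum defining $\phi_{n,M,C,\mathcal{M}}$, and to invoke Lemma~\ref{lemma82genHelp} to produce the convex-norm third term. Proposition~\ref{general31} already establishes
\begin{align*}
\mathbb{E}\|\textbf{X}\widehat{\boldsymbol{\beta}}^{\mathcal{M}} - \textbf{y}\|_2^2 \leq \min_{\boldsymbol{\beta}}\left\{\|\textbf{X}\boldsymbol{\beta}-\textbf{y}\|_2^2 + \min\left(\tfrac{\sigma^2 R}{n},\, 9\sigma^2 \tfrac{M_\mathcal{M}(\boldsymbol{\beta})}{n}\log\left(1 + \tfrac{eC}{\max(M_\mathcal{M}(\boldsymbol{\beta}),1)}\right)\right)\right\} + \tfrac{8\sigma^2}{n}\log 2,
\end{align*}
which, since $M_\mathcal{M}(\boldsymbol{\beta}) \leq \overline{M}_\mathcal{M}(\boldsymbol{\beta}) \leq \gamma M_\mathcal{M}(\boldsymbol{\beta})$ and the map $t \mapsto (t/n)\log(1+eC/t)$ is monotone on $t \geq 1$, already dominates the $\overline{M}_\mathcal{M}$-based rate appearing inside $\phi_{n,M,C,\mathcal{M}}$.

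To generate the third term, I would establish the auxiliary inequality
\begin{align*}
\min_{\boldsymbol{\beta}}\left\{\|\textbf{X}\boldsymbol{\beta}-\textbf{y}\|_2^2 + \nu^2 \tfrac{\overline{M}_\mathcal{M}(\boldsymbol{\beta})}{n}\log\left(1+\tfrac{eC}{\max(\overline{M}_\mathcal{M}(\boldsymbol{\beta}),1)}\right)\right\} \leq \min_{\boldsymbol{\beta}}\left\{\|\textbf{X}\boldsymbol{\beta}-\textbf{y}\|_2^2 + (3+1/e)\overline{\phi}_{n,M,C,\mathcal{M}}(\boldsymbol{\beta})\right\},
\end{align*}
where $\overline{\phi}$ denotes the convex-norm expression appearing inside $\phi_{n,M,C,\mathcal{M}}$ up to absolute constants. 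The $\boldsymbol{\beta}=0$ case is immediate; for $\boldsymbol{\beta}^*\neq 0$ I would appeal directly to Lemma~\ref{lemma82genHelp}, which yields
\begin{align*}
\mathrm{LHS} \leq \|\textbf{X}\boldsymbol{\beta}^*-\textbf{y}\|_2^2 + \nu^2 \tfrac{k^*}{n}\log\left(1+\tfrac{eC}{k^*}\right) + \tfrac{\gamma\|\boldsymbol{\beta}^*\|_\mathcal{M}^2}{k^*},
\end{align*}
and then balance the two $k^*$-dependent terms by choosing $k^*$ of order $\|\boldsymbol{\beta}^*\|_\mathcal{M}\sqrt{\gamma n}/\nu$. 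Substituting back recovers the claimed convex-norm rate up to the constant $3+1/e$. Setting $\nu = \sigma$ and combining with Proposition~\ref{general31} yields the full three-term minimum, with the trivial $\sigma^2/n$ entry of $\phi_{n,M,C,\mathcal{M}}$ arising by comparison against a simplest one-covariate model.

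The chief obstacle is the $k^*$ optimization, which splits cleanly into the regimes $k^* \leq \overline{M}_\mathcal{M}(\boldsymbol{\beta}^*)$ and $k^* > \overline{M}_\mathcal{M}(\boldsymbol{\beta}^*)$ (both treated within Lemma~\ref{lemma82genHelp}). In the first regime one must verify that the inner minimization over $\boldsymbol{\beta}$ with $\|\boldsymbol{\beta}\|_\mathcal{M} = \|\boldsymbol{\beta}^*\|_\mathcal{M}$ and $M_\mathcal{M}(\boldsymbol{\beta}) \leq k^*/\gamma$ can be performed, which is exactly where the $\sqrt{\gamma}$ factor enters the eventual convex-norm rate. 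The rest is book-keeping: absorb $(3+1/e)$ and the log term into the stated coefficient $11$ and the log argument $1 + 3eC\sigma/(\|\boldsymbol{\beta}\|_\mathcal{M}\sqrt{\gamma n})$, and add the $8\sigma^2\log 2/n$ tail from Proposition~\ref{general31} to the extra $9\log(1+eM)\sigma^2/n$ picked up by the $k^*$ balancing, producing the stated remainder $\frac{\sigma^2}{n}(9\log(1+eM)+8\log 2)$.
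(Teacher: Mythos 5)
Your proposal is correct and follows essentially the same route as the paper: invoke Lemma~\ref{lemma82genHelp}, balance the two $k^*$-dependent terms (the paper outsources this optimization to the proof of Lemma 8.2 of Rigollet and Tsybakov, while you sketch the choice $k^*\asymp \|\boldsymbol{\beta}^*\|_\mathcal{M}\sqrt{\gamma n}/\nu$ explicitly), set $\nu=\sigma$, and combine with Proposition~\ref{general31} before rounding constants. Your added remark explaining the $\sigma^2/n$ entry of the minimum via a one-covariate comparison fills in a step the paper leaves implicit.
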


\subsection{Lemmas for Specific Norms and Set Functions}
\label{mauryLemma}

We now give a two lemmas guaranteeing that the conditions in
Lemma~\ref{lemma82genHelp} are satisfied in two particular settings.  The
following lemma is a notationally adapted version of lemma 8.1
in~\cite{RigTsy10} and is given without proof:  
\begin{lem}
\label{lemma81penalizedL0}
For any $\boldsymbol{\beta}^* \in \mathbb{R}^M \backslash \{0\}$, any integer
$k\geq 1$, $\textbf{X}$ such that $\max_{1 \leq j \leq M}
||\textbf{x}_j||_2 \leq 1$, and any vector $\textbf{y}$:
\begin{align}
\min_{\boldsymbol{\beta}: |\boldsymbol{\beta}|_1 = |\boldsymbol{\beta}^*|_1; ||\boldsymbol{\beta}||_0 \leq k} ||\textbf{y} -
\textbf{X}{\boldsymbol{\beta}}||^2 \leq ||\textbf{y} - \textbf{X}{\boldsymbol{\beta}^*}||^2 + \frac{||\boldsymbol{\beta}^*||_1^2}{\min(k,||\boldsymbol{\beta}^*||_0)}
\end{align}
\end{lem}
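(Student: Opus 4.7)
The plan is to use a Maurey-type probabilistic approximation. The case $k \geq \|\boldsymbol{\beta}^*\|_0$ is trivial: then $\boldsymbol{\beta}=\boldsymbol{\beta}^*$ itself is feasible in the constrained minimum on the left (it has equal $\ell_1$-norm and $\|\boldsymbol{\beta}^*\|_0 \leq k$), giving left-hand side $= \|\textbf{y}-\textbf{X}\boldsymbol{\beta}^*\|^2$, and the stated inequality holds because the slack term $\|\boldsymbol{\beta}^*\|_1^2/\|\boldsymbol{\beta}^*\|_0$ on the right is non-negative. So from now on I would assume $k < \|\boldsymbol{\beta}^*\|_0$, in which case $\min(k,\|\boldsymbol{\beta}^*\|_0) = k$.

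Next, I would construct a random sparse approximation to $\boldsymbol{\beta}^*$. Let $s = \|\boldsymbol{\beta}^*\|_1$ and put probability $p_j = |\boldsymbol{\beta}^*_j|/s$ on index $j$. Draw $J_1,\ldots,J_k$ i.i.d.\ from this distribution and set $U_i = s\,\mbox{sign}(\boldsymbol{\beta}^*_{J_i})\,\textbf{e}_{J_i}$, where $\textbf{e}_j$ is the $j$-th canonical basis vector. A direct computation gives $\mathbb{E}U_i = \sum_j p_j \cdot s\,\mbox{sign}(\boldsymbol{\beta}^*_j)\,\textbf{e}_j = \boldsymbol{\beta}^*$, so the empirical mean $\widetilde{\boldsymbol{\beta}} = \frac{1}{k}\sum_{i=1}^k U_i$ is unbiased for $\boldsymbol{\beta}^*$. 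A bias--variance decomposition together with independence of the $U_i$ then gives
\begin{align*}
\mathbb{E}\|\textbf{y} - \textbf{X}\widetilde{\boldsymbol{\beta}}\|_2^2
&= \|\textbf{y} - \textbf{X}\boldsymbol{\beta}^*\|_2^2 + \tfrac{1}{k}\,\mathbb{E}\|\textbf{X}(U_1 - \boldsymbol{\beta}^*)\|_2^2 \\
&\leq \|\textbf{y} - \textbf{X}\boldsymbol{\beta}^*\|_2^2 + \tfrac{1}{k}\,\mathbb{E}\|\textbf{X}U_1\|_2^2,
\end{align*}
and since $\mathbb{E}\|\textbf{X}U_1\|_2^2 = s^2 \sum_j p_j \|\textbf{x}_j\|_2^2 \leq s^2$ by the column-normalization assumption $\|\textbf{x}_j\|_2 \leq 1$, the second term is at most $\|\boldsymbol{\beta}^*\|_1^2 / k$.

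By the probabilistic method there must then exist at least one realization $\widetilde{\boldsymbol{\beta}}$ with $\|\textbf{y}-\textbf{X}\widetilde{\boldsymbol{\beta}}\|_2^2 \leq \|\textbf{y} - \textbf{X}\boldsymbol{\beta}^*\|_2^2 + \|\boldsymbol{\beta}^*\|_1^2/k$, and it remains to check feasibility of this realization for the constrained minimum on the left-hand side. Sparsity is immediate: $\mbox{supp}(\widetilde{\boldsymbol{\beta}}) \subseteq \{J_1,\ldots,J_k\}$, so $\|\widetilde{\boldsymbol{\beta}}\|_0 \leq k$. For the $\ell_1$ equality, I would observe that the atoms $U_i$ contributing to any coordinate $j$ all carry the common sign $\mbox{sign}(\boldsymbol{\beta}^*_j)$, so no cancellations occur and $\|\widetilde{\boldsymbol{\beta}}\|_1 = \frac{1}{k}\sum_j s \cdot |\{i : J_i = j\}| = s = \|\boldsymbol{\beta}^*\|_1$.

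The main subtlety I expect to watch is precisely this last step: the lemma's feasibility set enforces equality of $\ell_1$-norms, not merely an inequality. Building the sign of $\boldsymbol{\beta}^*_{J_i}$ directly into each atom $U_i$ (rather than randomizing signs independently or using unsigned atoms) is the key design choice that prevents cancellation across repeated draws of the same coordinate and thus secures exact $\ell_1$-norm preservation. Everything else follows the standard Maurey recipe of turning an expectation bound into the existence of a deterministic good approximant.
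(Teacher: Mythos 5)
Your proof is correct and is essentially the same argument as the source: the paper states this lemma without proof, citing Lemma 8.1 of Rigollet and Tsybakov, whose proof is exactly this Maurey-type construction (and the paper's own proof of the grouped analogue, Lemma~\ref{lemma81groupedL0}, uses the identical device, with your i.i.d.\ draws $J_1,\ldots,J_k$ repackaged as a multinomial count vector). Your handling of the $\ell_1$-equality constraint via sign-carrying atoms is the right and standard resolution of the one subtle point.
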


We now give a version of this lemma for grouped $\ell_0$-like norms:

\begin{lem}
\label{lemma81groupedL0}
For any $\boldsymbol{\beta}^* \in \mathbb{R}^M \backslash \{0\}$, any integer
$k\geq 1$, $\textbf{X}$ such that $\max_{1 \leq j \leq M}
||\textbf{x}_j||_2 \leq 1$, and any vector $\textbf{y}$:
\begin{align}
\min_{\boldsymbol{\beta}: |\boldsymbol{\beta}|_{1,\mathcal{G}} = |\boldsymbol{\beta}^*|_{1,\mathcal{G}}; ||\boldsymbol{\beta}||_{0,\mathcal{G}} \leq k} ||\textbf{y} -
\textbf{X}{\boldsymbol{\beta}}||^2 \leq ||\textbf{y} - \textbf{X}{\boldsymbol{\beta}^*}||^2 + \frac{||\boldsymbol{\beta}^*||_{1,\mathcal{G}}^2}{\min(k,||\boldsymbol{\beta}^*||_{0,\mathcal{G}})}
\end{align}
\end{lem}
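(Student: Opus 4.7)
The plan is to mirror the Maurey-type probabilistic argument behind Lemma~\ref{lemma81penalizedL0}, but replacing the sampling of individual coordinates by the sampling of whole groups, weighted according to an optimal $\mathcal{G}$-decomposition of $\boldsymbol{\beta}^*$. First, dispose of the trivial case $k \geq \|\boldsymbol{\beta}^*\|_{0,\mathcal{G}}$: take $\boldsymbol{\beta} = \boldsymbol{\beta}^*$, whose error is exactly $\|\textbf{y}-\textbf{X}\boldsymbol{\beta}^*\|^2$, and note that the claimed bound merely adds a nonnegative remainder. So assume $k < \|\boldsymbol{\beta}^*\|_{0,\mathcal{G}}$ and write $U := \|\boldsymbol{\beta}^*\|_{1,\mathcal{G}}$.

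Fix a decomposition $\{\mathbf{v}_g^*\}_{g \in \mathcal{G}} \in \mathcal{V}_\mathcal{G}(\boldsymbol{\beta}^*)$ attaining (or, if the infimum is not realized, $\epsilon$-approaching) $\sum_g \|\mathbf{v}_g^*\|_2 = U$, with each $\mbox{supp}(\mathbf{v}_g^*) \subseteq g$. Draw $G_1,\ldots,G_k$ i.i.d.\ with $\mathbb{P}(G=g) = \|\mathbf{v}_g^*\|_2/U$ over the groups with $\mathbf{v}_g^* \neq 0$, and form
\begin{align}
\widetilde{\boldsymbol{\beta}} := \frac{U}{k}\sum_{i=1}^k \frac{\mathbf{v}_{G_i}^*}{\|\mathbf{v}_{G_i}^*\|_2}. \notag
\end{align}
A direct calculation shows $\mathbb{E}\widetilde{\boldsymbol{\beta}} = \boldsymbol{\beta}^*$. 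Since $\widetilde{\boldsymbol{\beta}}$ is an explicit sum of at most $k$ atoms, each supported in a single group, Equation~\ref{groupedEllZeroNorm} gives $\|\widetilde{\boldsymbol{\beta}}\|_{0,\mathcal{G}} \leq k$, and that same decomposition witnesses $\|\widetilde{\boldsymbol{\beta}}\|_{1,\mathcal{G}} \leq U$ (with automatic equality when $\mathcal{G}$ partitions $\mathcal{I}$, since decompositions are unique).

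Next I would apply the variance split for the unbiased average $\textbf{X}\widetilde{\boldsymbol{\beta}}$ of $k$ i.i.d.\ terms,
\begin{align}
\mathbb{E}\|\textbf{y}-\textbf{X}\widetilde{\boldsymbol{\beta}}\|_2^2 = \|\textbf{y}-\textbf{X}\boldsymbol{\beta}^*\|_2^2 + \frac{U}{k}\sum_{g: \mathbf{v}_g^* \neq 0} \|\mathbf{v}_g^*\|_2 \cdot \frac{\|\textbf{X}\mathbf{v}_g^*\|_2^2}{\|\mathbf{v}_g^*\|_2^2} - \frac{\|\textbf{X}\boldsymbol{\beta}^*\|_2^2}{k}. \notag
\end{align}
The main obstacle is to bound each ratio $\|\textbf{X}\mathbf{v}_g^*\|_2/\|\mathbf{v}_g^*\|_2 \leq 1$: for singleton groups (the $\ell_1$ setting of Lemma~\ref{lemma81penalizedL0}) this is exactly the hypothesis $\max_j \|\textbf{x}_j\|_2 \leq 1$, but for larger groups it is the standard group-lasso within-group operator-norm condition on $\textbf{X}$, which the argument either inherits implicitly or imposes by restricting the infimum in Equation~\ref{groupedEllOneNorm} to atoms with unit $\textbf{X}$-image. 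Granted this bound, the variance collapses to at most $U^2/k$ after discarding the nonpositive $-\|\textbf{X}\boldsymbol{\beta}^*\|_2^2/k$ term.

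Finally, the probabilistic method selects a deterministic realization of $\widetilde{\boldsymbol{\beta}}$ achieving the expected-value bound, exhibiting some $\boldsymbol{\beta}$ with $\|\boldsymbol{\beta}\|_{0,\mathcal{G}} \leq k$ and $\|\boldsymbol{\beta}\|_{1,\mathcal{G}} \leq U$ that satisfies $\|\textbf{y}-\textbf{X}\boldsymbol{\beta}\|_2^2 \leq \|\textbf{y}-\textbf{X}\boldsymbol{\beta}^*\|_2^2 + U^2/k$. Combining with the trivial case gives the advertised denominator $\min(k,\|\boldsymbol{\beta}^*\|_{0,\mathcal{G}})$. The only residual bookkeeping is to upgrade the inequality $\|\boldsymbol{\beta}\|_{1,\mathcal{G}} \leq U$ to the equality stipulated in the feasible set of the minimum: this is automatic under a partition, and for overlapping $\mathcal{G}$ can be handled by a short padding argument that inflates the grouped $\ell_1$ norm without altering $\textbf{X}\boldsymbol{\beta}$. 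I expect the variance step — the control of $\|\textbf{X}\mathbf{v}_g^*\|_2$ for within-group atoms — to be the one genuinely technical point; everything else tracks the proof of Lemma~\ref{lemma81penalizedL0} with cosmetic changes.
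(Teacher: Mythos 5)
Your construction is the paper's own proof in different clothing: drawing $G_1,\ldots,G_k$ i.i.d.\ from the distribution $q_g=\|\textbf{v}_g^*\|_2/U$ and averaging the rescaled atoms is exactly the multinomial allocation $\kappa\sim\mathbb{M}(K,\textbf{q})$ with $\widetilde{\boldsymbol{\beta}}_g=\kappa_g v_g^* U/(K\|v_g^*\|_2)$ used in the paper, and both arguments then run through the same bias--variance split and the probabilistic method. The one minor divergence is at the end: for the equality constraint $\|\boldsymbol{\beta}\|_{1,\mathcal{G}}=\|\boldsymbol{\beta}^*\|_{1,\mathcal{G}}$ the paper invokes Corollary 1 of \cite{Jacob:2009:GLO:1553374.1553431} rather than your padding argument; either works, since the witness decomposition has total grouped mass exactly $U$ because $\sum_g\kappa_g=K$.

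The step you flag as the genuinely technical point is a real issue, and you should know that the paper does not close it either. Bounding $\|\textbf{X}\textbf{v}_g^*\|_2\le\|\textbf{v}_g^*\|_2$ for a within-group atom does not follow from $\max_j\|\textbf{x}_j\|_2\le 1$: if the columns indexed by $g$ are identical, the ratio can reach $\sqrt{|g|}$. The paper's proof bounds only the diagonal of the covariance matrix $\widetilde{\Sigma}$ of $\widetilde{\boldsymbol{\beta}}$ and then asserts the bound on the quadratic form $\frac{1}{n}\sum_i\widetilde{\textbf{x}}_i^T\widetilde{\Sigma}\widetilde{\textbf{x}}_i$, silently discarding the within-group off-diagonal entries of $\widetilde{\Sigma}$, which are of the same order as the diagonal (all coordinates in a group are driven by the single count $\kappa_g$). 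So the lemma as stated really needs either within-group orthonormality $\textbf{X}_g^T\textbf{X}_g=I_{|g|}$ (the standard group-lasso normalization) or a unit operator-norm condition on each submatrix $\textbf{X}_g$, exactly as you suspected; with that condition made explicit, your argument and the paper's coincide and both are complete.
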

\begin{proof}
Fix $\boldsymbol{\beta}^* \in \mathbb{R}^M \backslash \{0\}$, and integer $k\geq 1$.  Set
$K = \min(k, ||\boldsymbol{\beta}^*||_{0,\mathcal{G}})$.  Let
$\mathcal{V}_\mathcal{G}(\boldsymbol{\beta}^*) = \{v_g^*\}$ be a
$\mathcal{G}$-decomposition of $\boldsymbol{\beta}^*$ minimizing the norm
$||\boldsymbol{\beta}^*||_{1,\mathcal{G}}$.  Then, define the multinomial
parameter, a $|\mathcal{G}|$ vector, $\textbf{q} = \{q_1,\ldots,
q_{|\mathcal{G}|}\}$, with $q_g =
\frac{||v_g^*||_2}{||\boldsymbol{\beta}^*||_{1,\mathcal{G}}}$.  Let the
$|\mathcal{G}|$-vector $\kappa$ have multinomial distribution
$\mathbb{M}(K,\textbf{q})$.  We then define the random vector
$\widetilde{\boldsymbol{\beta}} \in \mathbb{R}^{\sum_\mathcal{G} |g|}$, a
concatenation of $|\mathcal{G}|$ vectors: $[\widetilde{\boldsymbol{\beta}}_g]_{g \in
  \mathcal{G}}$, with components $\widetilde{\boldsymbol{\beta}}_g = \frac{\kappa_g
  v_g^*||\boldsymbol{\beta}^*||_{1,\mathcal{G}}}{K ||v^*_g||_2}$.  Here, we
adopt the convention that if
$v_g^*/||v_g^*||_2 = 0/0$, then $v_g^*/||v_g^*||_2 = 0$ (note that
$q_g =0$ in this case).  Thus, we have
that $\mathbb{E}\widetilde{\boldsymbol{\beta}}_g = v_g^*$, and $\mathbb{V}(\kappa_g) =
Kq_g(1-q_g)$.  Now, we have that the entries on the diagonal of the
covariance matrix of $\widetilde{\boldsymbol{\beta}}$ is bounded as follows: 
\begin{align}
\mbox{Diag}(\widetilde{\Sigma}_g) \leq \frac{||\boldsymbol{\beta}^*||_{1,\mathcal{G}}||v_g^*||_2}{K}\textbf{1}_{|g|}.
\end{align}
Here $\textbf{1}_{|g|}$ is a $|g|$ length vector with entries all
equal to $1$.  Now, let $\overline{\boldsymbol{\beta}} \in
\mathbb{R}^M$ be such that
$\mathcal{V}_\mathcal{G}(\overline{\boldsymbol{\beta}}) = \{
v_g(\widetilde{\boldsymbol{\beta}}) \}$, where
$v_g(\widetilde{\boldsymbol{\beta}}) \in \mathbb{R}^M$ is equal to
$\widetilde{\boldsymbol{\beta}}_g$ for the indices in $g$, and equal
to zero otherwise. 

Then, we define the following $n \times \sum_{g \in
  \mathcal{G}}|g|$ matrix: $\widetilde{\textbf{X}} = [\textbf{x}_j: j \in g]_{g
  \in \mathcal{G}}$, where $\textbf{x}_i$ is the $i$th column of
$\textbf{X}$. Let $\widetilde{\boldsymbol{\beta}^*} =
[\boldsymbol{\beta}_g^*]_{g\in\mathcal{G}}$. Now, if $\max_{i \in \mathcal{I}} ||\textbf{x}_i||_2 \leq 1$, then for any vector $\textbf{y}$ we have:
\begin{align}
\mathbb{E}||\textbf{y} - \textbf{X}\overline{\boldsymbol{\beta}}||_2^2 &= \mathbb{E}||\textbf{y} - \widetilde{\textbf{X}}{\widetilde{\boldsymbol{\beta}}}||_2^2\\
 &= ||\textbf{y} - \widetilde{\textbf{X}}\widetilde{{\boldsymbol{\beta}^*}}||_2^2 + \frac{1}{n}\sum_{i=1}^n \widetilde{\textbf{x}}_i^T\widetilde{\Sigma} \widetilde{\textbf{x}}_i\\
&\leq ||\textbf{y} - \textbf{X}{\boldsymbol{\beta}^*}||_2^2 + \frac{||\boldsymbol{\beta}^*||_{1,\mathcal{G}}^2}{K}
\end{align}
In the above $\widetilde{\textbf{x}}_i$ is the $i$th row of $\widetilde{\textbf{X}}$.
It is clear that $||\overline{\boldsymbol{\beta}}||_{0,\mathcal{G}}
\leq K$. Further, by Corollary 1 from~\cite{Jacob:2009:GLO:1553374.1553431}, $||\overline{\boldsymbol{\beta}}||_{1,\mathcal{G}} = ||\boldsymbol{\beta}^*||_{1,\mathcal{G}}$.  The result then follows.
\end{proof}

\subsection{Discussion}

 The theoretical framework presented here  leads
us to postulate that there are many potential aggregate estimators
that give similar theoretical guarantees. However, the assumptions in
Lemma~\ref{lemma82genHelp} play a key role. Beginning with a set
function and its corresponding convex extension, we
could propose a prior that would give us an aggregate estimator that
would enjoy adaptation to patterns in terms of both the set function
and the convex norm. In light of the assumptions in
Lemma~\ref{lemma82genHelp}, it is necessary to produce a general form
of Lemmas~\ref{lemma81penalizedL0} and~\ref{lemma81groupedL0} --- which give a bound on the approximation of $\textbf{y}$ when we restrict the approximating functions to a class that depends on our set function.  Such a
result remains an open question, but we suspect it is not attainable for
all set functions. However, since such a result needs to hold only for
a set function (and its corresponding convex norm) that is a
bounded between our target function, there may exists several interesting
extensions that we have not proposed in this paper.


\begin{table}

\begin{center}
\caption{Simulation results for 1-dimensional linear sparsity
  patterns, see e.g. Figure~\ref{oneDimOneRunFigure}. Both sparse
  (SPA) and structured sparse (SSA) aggregation methods outperform the
  lasso in terms of prediction and recovery of the true sparsity
  pattern.  Note that for paired runs of both aggregation methods on a
  single simulated data set, the structured estimator is superior in
  both measures at least 95\% of the time. For each measure, the mean
  over 250 trials is reported with the standard error in parentheses.}
  \label{penalizedEllZeroSimulationOneD}
\begin{tabular}{ | c  | c c c | }
\hline
  ($n$, $M$, $C$, $C_{on}$, $\sigma$) & Prediction (SPA)&
  Prediction (SSA) & Prediction (lasso) \\ 
\hline
(100, 100, 1, 9, 1) & 0.168 (0.104) & 0.123 (0.065) & 0.813 (0.347) \\
(200, 500, 1, 20, 1.5) & 0.371 (0.132) & 0.298 (0.156) & 2.765 (0.723) \\
(100, 100, 2, 5, 1.1) & 0.156 (0.078) & 0.138 (0.079) & 1.012 (0.46)
\\
\hline
\hline
  ($n$, $M$, $C$, $C_{on}$, $\sigma$) & Recovery (SPA)&
  Recovery (SSA) &Recovery (lasso)\\ 
(100, 100, 1, 9, 1) & 0.018 (0.01) & 0.014 (0.006) & 0.089 (0.035) \\
(200, 500, 1, 20, 1.5) & 0.019 (0.007) & 0.015 (0.007) & 0.14 (0.034) \\
(100, 100, 2, 5, 1.1) & 0.016 (0.007) & 0.014 (0.007) & 0.104 (0.044) \\
\hline
\end{tabular}
\end{center}

\end{table}

\begin{table}

\begin{center}
\caption{Simulation results for 2-dimensional lattice sparsity
  patterns, see e.g. Figure~\ref{twoDimOneRunFigure}. Both sparse
  (SPA) and structured sparse (SSA) aggregation methods outperform the
  lasso in terms of prediction and recovery of the true sparsity
  pattern.  Note that for paired runs of both aggregation methods on a
  single simulated data set, the structured estimator is superior in
  both measures at least 95\% of the time.  For each measure, the mean
  over 250 trials is reported with the standard error in parentheses. }
  \label{penalizedEllZeroSimulationTwoD}
\begin{tabular}{ | c  | c c c | }
\hline
  ($n$, $M$, $C$, $C_{on}$, $\sigma$) & Prediction (SPA)&
  Prediction (SSA) & Prediction (lasso)\\ 
\hline
(100, 100, 1, 9, 1) & 0.131 (0.063) & 0.113 (0.059) & 0.844 (0.398) \\
(200, 400, 2, 9, 1.4) & 0.265 (0.088) & 0.214 (0.085) & 2.061 (0.616) \\
\hline
\hline
  ($n$, $M$, $C$, $C_{on}$, $\sigma$) & Recovery (SPA)&
  Recovery (SSA) & Recovery (lasso)\\ 
\hline
(100, 100, 1, 9, 1) & 0.015 (0.006) & 0.013 (0.006) & 0.094 (0.043) \\
(200, 400, 2, 9, 1.4) & 0.015 (0.004) & 0.012 (0.004) & 0.115 (0.032) \\
\hline
\end{tabular}
\end{center}

\end{table}

\begin{table}
\begin{center}
\caption{Comparison of predictive power for the HIV data. We estimated
  the testing errors using three fold data splitting.  We see that
  the structured aggregation estimator gives comparable predictive
  performance to the sparse aggregation estimator. The structured
  estimator also carries the benefit of better biological
  interpretability. The mean test error is given with the standard
  errors in parentheses.}
  \label{hivTable}
\begin{tabular}{ | c  | c c c c | }
\hline
Data Splitting  & Sparse  & Structured  & Stepwise & \\ 
Test Error &  Aggregation &  Aggregation & Regression & lasso \\ 
\hline
 \rule[-.3cm]{0cm}{.8cm} $\left|\left|\textbf{y}_{\mbox{test}} -
 \textbf{X}_{\mbox{test}}\widehat{\boldsymbol{\beta}}\right|\right|_2^2$ &
0.65 (0.04)& 0.69 (0.07) & 3.03 (0.18)& 1.45 (0.2)\\
\hline
\end{tabular}
\end{center}

\end{table}


\begin{figure}[t]
\centering
\includegraphics[width = 5in]{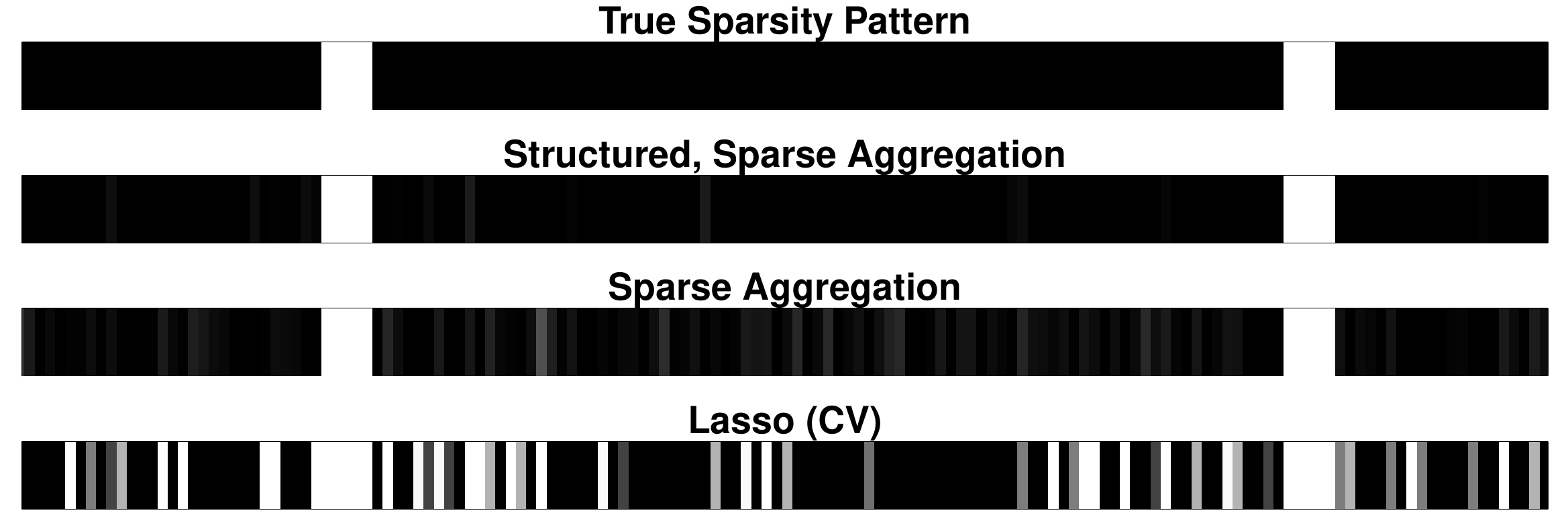}
\caption{A structured sparsity setting for a linear structure in the coefficient vector --- adjacent entries of $\boldsymbol{\beta}$ are considered close. We display the true linear sparsity pattern (top) and recovered sparsity patterns
  by structured, sparse aggregation (second from top), sparse
  aggregation (third from top), and cross-validated lasso
  (bottom). Black (0\%) to white (100\%) indicates percentage of
  selection in the Markov Chain algorithm for the aggregation estimators.  For the lasso, black (0\%) to white (100\%) indicates the percentage of selection out of 100 replications of cross validation.
  Structured sparse aggregation is able to best recover the true sparsity pattern. Both sparse aggregation and the lasso suffer from false positives scattered throughout the space of candidate covariates.}
\label{oneDimOneRunFigure}
\end{figure}

\begin{figure}[t]
\centering
\includegraphics[width = 5in]{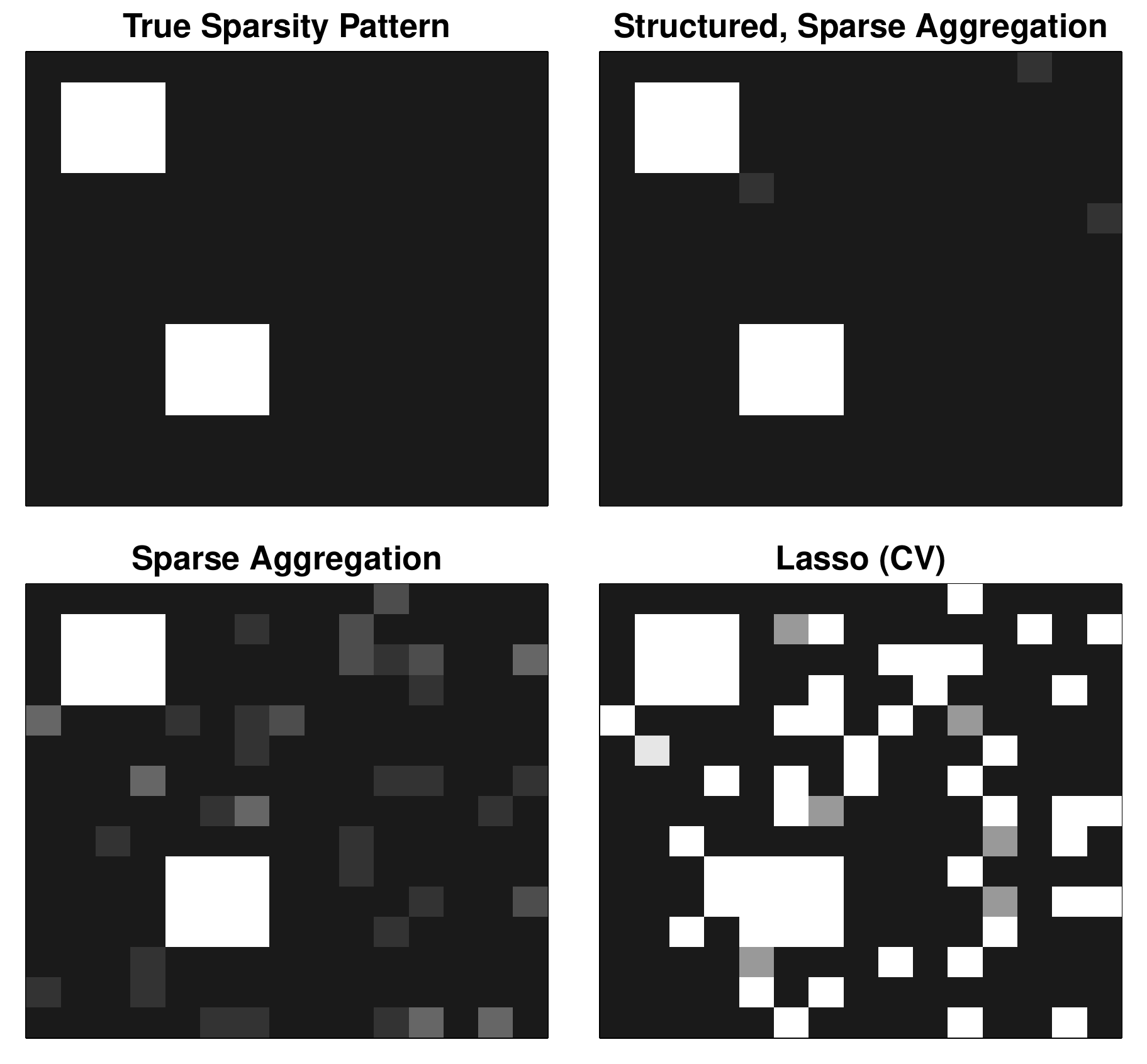}
\caption{An example of structured sparsity in a two dimensional lattice --- the coefficient vector $\boldsymbol{\beta}$ is an unraveled matrix. We display the true 2-D lattice sparsity pattern (top left) and the recovered sparsity patterns
  by structured, sparse aggregation (top right), sparse
  aggregation (bottom left), and cross-validated lasso
  (bottom right). Black (0\%) to white (100\%) indicates percentage of
  selection in the Markov Chain algorithm for the aggregation estimators.  For the lasso, black (0\%) to white (100\%) indicates the the percentage of selection out of 100 replications of cross validation.
  Structured sparse aggregation is able to best recover the true sparsity pattern. Both sparse aggregation and the lasso suffer from false positives scattered throughout the lattice of candidate predictors.}
\label{twoDimOneRunFigure}
\end{figure}

\begin{figure}[t]
\centering
\includegraphics[width=5in]{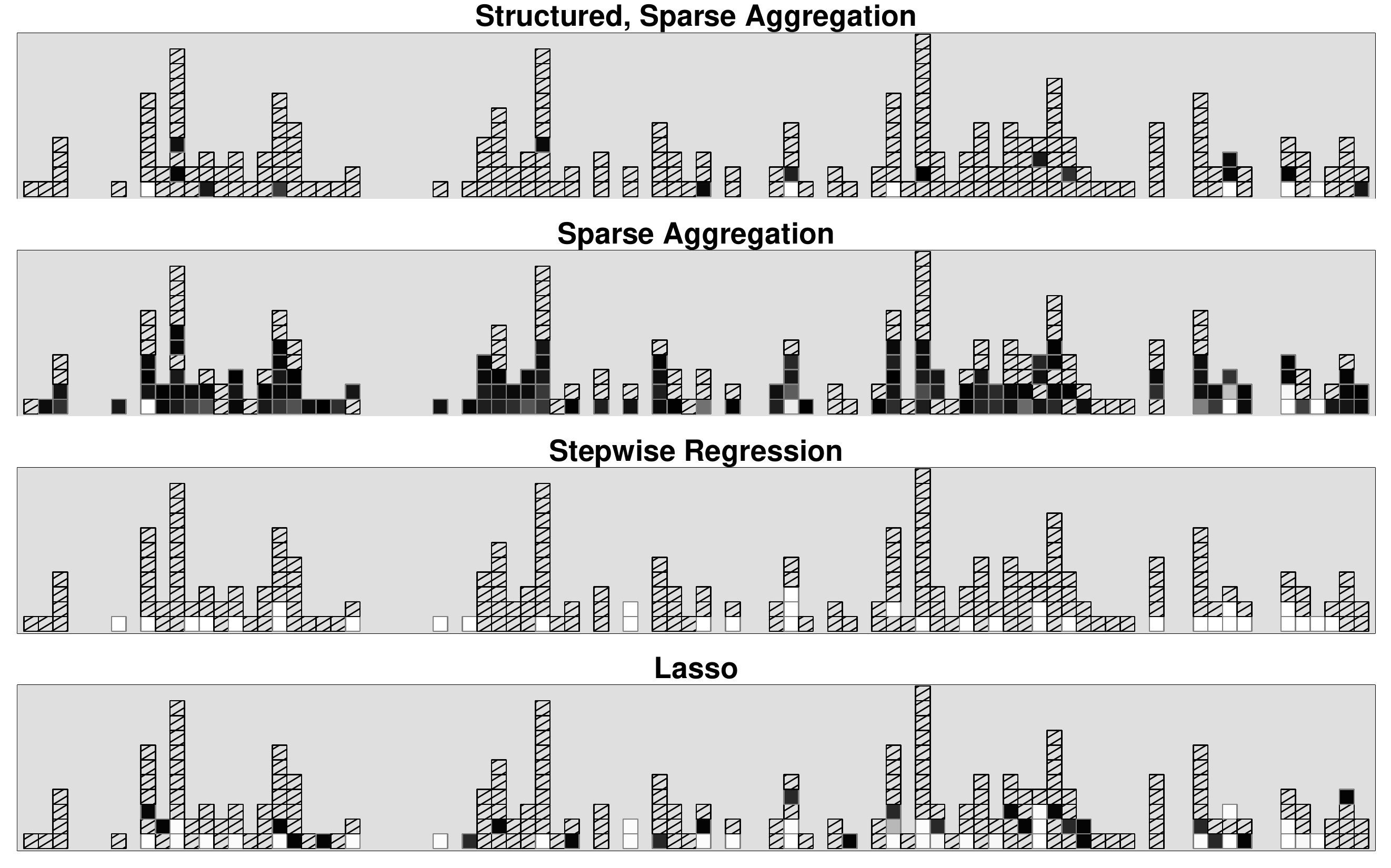}
\caption{Structured sparsity in an application: HIV drug resistance.  The panels give the selected sparsity patterns across HIV protein mutations for structured, sparse aggregation (top), sparse
  aggregation (second from top), stepwise regression
  (third from top), and the lasso (bottom). Each box represents a mutation covariate.  The horizontal
  axis represents location in the protein sequence. The locations ($1$
  to $99$) are arranged left to right as in the protein sequence. The vertical axis
  has no spatial meaning, each stack represents the number of mutations observed at that location in the protein sequence. Mutation predictors in adjacent bands are from adjacent locations in the protein sequence. Since proteins typically function via active sites, our structured model encourages clustered selection in the sequence.
For the aggregation methods, the color of the boxes indicates the
percentage of selection in the Markov Chain algorithm: Black (0\%) to white (100\%). If a mutation is never selected, it is gray and diagonally shaded. For the lasso, black (0\%) to white (100\%) indicates the the percentage of selection out of 100 replications of cross validation. For stepwise regression, we only report the selection a single instance of the algorithm.}
\label{hivFigure}
\end{figure}


\bibliographystyle{ECA_jasa}
\bibliography{refs}

\end{document}